\def\eqref#1{equation~\ref{#1}}
\def\1{\bm{1}}
\def\vt{{\bm{t}}}
\def\mI{{\bm{I}}}
\def\mO{{\bm{O}}}
\def\mZ{{\bm{Z}}}
\DeclareMathAlphabet{\mathsfit}{\encodingdefault}{\sfdefault}{m}{sl}
\SetMathAlphabet{\mathsfit}{bold}{\encodingdefault}{\sfdefault}{bx}{n}
\def\gE{{\mathcal{E}}}
\def\gG{{\mathcal{G}}}
\def\gN{{\mathcal{N}}}
\def\gV{{\mathcal{V}}}
\newcommand{\R}{\mathbb{R}}
\definecolor{darkblue}{rgb}{0, 0, 0.5}
\newtheorem{theorem}{Theorem}
\newtheorem{lemma}[theorem]{Lemma}
\title{Conditional Antibody Design as 3D Equivariant Graph Translation}
\author{%
  Xiangzhe Kong$^1$~~Wenbing Huang$^{4,5,}$\thanks{Corresponding authors: Wenbing Huang, Yang Liu.}~~Yang Liu$^{1,2,3,*}$ \\
  \small $^1$Dept. of Comp. Sci. \& Tech., Institute for AI, BNRist Center, Tsinghua University \\
  \small $^2$Institute for AI Industry Research (AIR), Tsinghua University \\
  \small $^3$Beijing Academy of Artificial Intelligence  \\
  \small $^4$Gaoling School of Artificial Intelligence, Renmin University of China\\
  \small $^5$ Beijing Key Laboratory of Big Data Management and Analysis Methods, Beijing, China\\
  \small{\texttt{Jackie\_KXZ@outlook.com},~\texttt{hwenbing@126.com},~\texttt{liuyang2011@tsinghua.edu.cn}}
}
\begin{document}

\maketitle

\begin{abstract}

Antibody design is valuable for therapeutic usage and biological research. Existing deep-learning-based methods encounter several key issues: 1) incomplete context for Complementarity-Determining Regions (CDRs) generation; 2) incapability of capturing the entire 3D geometry of the input structure; 3) inefficient prediction of the CDR sequences in an autoregressive manner. In this paper, we propose Multi-channel Equivariant Attention Network (MEAN) to co-design 1D sequences and 3D structures of CDRs. To be specific, MEAN formulates antibody design as a conditional graph translation problem by importing extra components including the target antigen and the light chain of the antibody. Then, MEAN resorts to E(3)-equivariant message passing along with a proposed attention mechanism to better capture the geometrical correlation between different components. Finally, it outputs both the 1D sequences and 3D structure via a multi-round progressive full-shot scheme, which enjoys more efficiency and precision against previous autoregressive approaches. Our method significantly surpasses state-of-the-art models in sequence and structure modeling,
antigen-binding CDR design, and binding affinity optimization. Specifically, the relative improvement to baselines is about 23\% in antigen-binding CDR design and 34\% for affinity optimization.

% The abstract paragraph should be indented 1/2~inch (3~picas) on both left and
% right-hand margins. Use 10~point type, with a vertical spacing of 11~points.
% The word \textsc{Abstract} must be centered, in small caps, and in point size 12. Two
% line spaces precede the abstract. The abstract must be limited to one
% paragraph.
\end{abstract}

\section{Introduction}

Antibodies are Y-shaped proteins used by our immune system to capture specific pathogens. They show great potential in therapeutic usage and biological research for their strong specificity: each type of antibody usually binds to a unique kind of protein that is called antigen~\citep{basu2019recombinant}. The binding areas are mainly located at the so-called Complementarity-Determining Regions (CDRs) in antibodies~\citep{kuroda2012computer}. Therefore, the critical problem of antibody design is to identify CDRs that bind to a given antigen with desirable properties like high affinity and colloidal stability~\citep{tiller2015advances}. There have been unremitting efforts made for antibody design by using deep generative models~\citep{saka2021antibody,jin2021iterative}. Traditional methods focus on modeling only the 1D CDR sequences, while a recent work~\citep{jin2021iterative} proposes to co-design the 1D sequences and 3D structures via Graph Neural Network (GNN).

Despite the fruitful progress, existing approaches are still weak in modeling the spatial interaction between antibodies and antigens. For one thing, the context information is insufficiently considered. The works~\citep{liu2020antibody, jin2021iterative} only characterize the relation between CDRs and the backbone context of the same antibody chain, without the involvement of the target antigen and other antibody chains, which could lack complete clues to reflect certain important properties for antibody design, such as binding affinity. For another, they are still incapable of capturing the entire 3D geometry of the input structures. One vital property of the 3D Biology is that each structure (molecular, protein, etc) should be independent to the observation view, exhibiting E(3)-equivariance\footnote{E(3) is the group of Euclidean transformations: rotations, reflections, and translations.}. To fulfill this constraint, the method by~\citet{jin2021iterative} pre-processes the 3D coordinates as certain invariant features before feeding them to the model. However, such pre-procession will lose the message of direction in the feature and hidden spaces, making it less effective in characterizing the spatial proximity between different residues in antibodies/antigens. Further, current generative models~\citep{saka2021antibody,jin2021iterative} predict the amino acids one by one; such autoregressive fashion suffers from low efficiency and accumulated errors during inference.

To address the above issues, this paper formulates antibody design as E(3)-equivariant graph translation, which is equipped with the following contributions:
\textbf{1. New Task.} 
We consider \textit{conditional generation}, where the input contains not only the heavy-chain context of CDRs but also the information of the antigen and the light chain. 
\textbf{2. Novel Model.}
We put forward an end-to-end \textit{Multi-channel Equivariant Attention Network} (MEAN) that outputs both 1D sequence and 3D structure of CDRs. MEAN operates directly in the space of 3D coordinates with E(3)-equivariance (other than previously-used E(3)-invariant models), such that it maintains the full geometry of residues. By alternating between an internal context encoder and an external attentive encoder, MEAN leverages 3D message passing along with equivariant attention mechanism to capture long-range and spatially-complicated interactions between different components of the input complex. 
\textbf{3. Efficient Prediction.}
Upon MEAN, we propose to progressively generate CDRs over multiple rounds, where each round updates both the sequences and structures in a full-shot manner. This \textit{progressive full-shot decoding} strategy is less prone to accumulated errors and more efficient during the inference stage, compared with traditional autoregressive models~\citep{gebauer2019symmetry, jin2021iterative}. 
We validate the efficacy of our model on three challenging tasks: sequence and structure modeling, antigen-binding CDR design and binding affinity optimization. Compared to previous methods that neglect the context of light chain and antigen~\citep{saka2021antibody, akbar2022silico,jin2021iterative}, our model achieves significant improvement on modeling the 1D/3D joint distribution, and makes a great stride forward on recovering or optimizing the CDRs that bind to the target antigen.

\section{Related Work}
\vskip -0.1in
\paragraph{Antibody Design}
Early approaches optimize antibodies with hand-crafted energy functions~\citep{li2014optmaven, lapidoth2015abdesign, adolf2018rosettaantibodydesign}, which rely on costly simulations and have intrinsic defects because the inter-chain interactions are complicated in nature and cannot be fully captured by simple force fields or statistical functions~\citep{graves2020review}. 
Hence, attention has been paid to applying deep generative models for 1D sequence prediction~\citep{alley2019unified, liu2020antibody, saka2021antibody, akbar2022silico}. Recently, to further involve the 3D structure, \citet{jin2021iterative} proposes to design the sequences and structures of CDRs simultaneously. It represents antibodies with E(3)-invariant features upon the distance matrix and generates the sequences autoregressively. Its follow-up work~\citep{jin2022antibody} further involves the epitope, but only considers CDR-H3 without all other components in the antibody. Different from the above learning-based works, our method considers a more complete context by importing the 1D/3D information of the antigen and the light chain. More importantly, we develop an E(3)-equivariant model that is skilled in representing the geometry of and the interactions between 3D structures.
%Some other works only optimize the 1D sequences of CDRs with a predictor of the binding specificity~\citep{khan2022antbo}, heavily relying on the credibility of the predictor.
% Some other works formulate antibody design as an optimization problem over the space of CDRs, with the aid of a black-box predictor of the binding specificity~\citep{khan2022antbo}.
% These methods rely heavily on the credibility of the predictor, and only optimize 1D sequences.
Note that antibody design assumes both the CDR sequences and structures are unknown, whereas general protein design predicts sequences based on structures~\citep{ingraham2019generative, karimi2020novo, cao2021fold2seq}. %Note that antibody design is distinct from general protein design which predicts the protein sequences conditioned on 3D structures~\citep{ingraham2019generative, karimi2020novo, cao2021fold2seq}, whereas antibody design assumes both the CDR sequences and structures are unknown.
\paragraph{Equivariant Graph Neural Networks}
The growing availability of 3D structural data in various fields~\citep{jumper2021highly} leads to the emergence of geometrically equivariant graph neural networks~\citep{klicpera2020directional,liu2021spherical,puny2021frame, han2022geometrically}. In this paper, we exploit the scalarization-based E(n)-equivariant GNNs~\citep{satorras2021n} as the building block of our MEAN. Specifically, we adopt the multichannel extension by~\citet{huang2022equivariant} that naturally complies with the multichannel representation of a residue. Furthermore, we have developed a novel equivariant attention mechanism in MEAN to better capture antibody-antigen interactions.

\section{Method}

% We first provide the necessary preliminaries related to antibody design and the basic notations of our task formulation in~\textsection~\ref{sec:repr}. Then we introduce the details of our proposed Multi-channel Equivariant Attention Network (MEAN) in~\textsection~\ref{sec:mean} and show how to incorporate it into antibody generation efficiently in a progressive and full-shot manner in~\textsection~\ref{sec:ppdec}.

\subsection{Preliminaries, Notations and Task Formulation}
  \label{sec:repr}
  % TODO: insert a figure of protein for demonstration
  An antibody is a Y-shaped protein (Figure~\ref{fig:antibody}) with two symmetric sets of chains, each composed of a heavy chain and a light chain~\citep{kuroda2012computer}. In each chain, there are some constant domains, and a \textit{variable domain} ($V_H$/$V_L$) that has three \textit{Complementarity Determining Regions (CDRs)}. 
  Antigen-binding sites occur on the variable domain where the interacting regions are mostly CDRs, especially CDR-H3. The remainder of the variable domain other than CDRs is structurally well conserved and often called \textit{framework region}~\citep{kuroda2012computer, jin2021iterative}. Therefore, previous works usually formalize the antibody design problem as finding CDRs that fit into the given framework region~\citep{shin2021protein, akbar2022silico}. As suggested by~\citep{fischman2018computational,jin2021iterative}, we focus on generating CDRs in heavy chains since they contribute the most to antigen-binding affinity and are the most challenging ones to characterize. Nevertheless, in contrast to previous studies, we additionally incorporate the antigen and the light chain into the context in the form of antibody-antigen complexes, to better control the binding specificity of generated antibodies.
  
  \begin{figure}[htbp]
      \centering
      \includegraphics[width=0.8\textwidth]{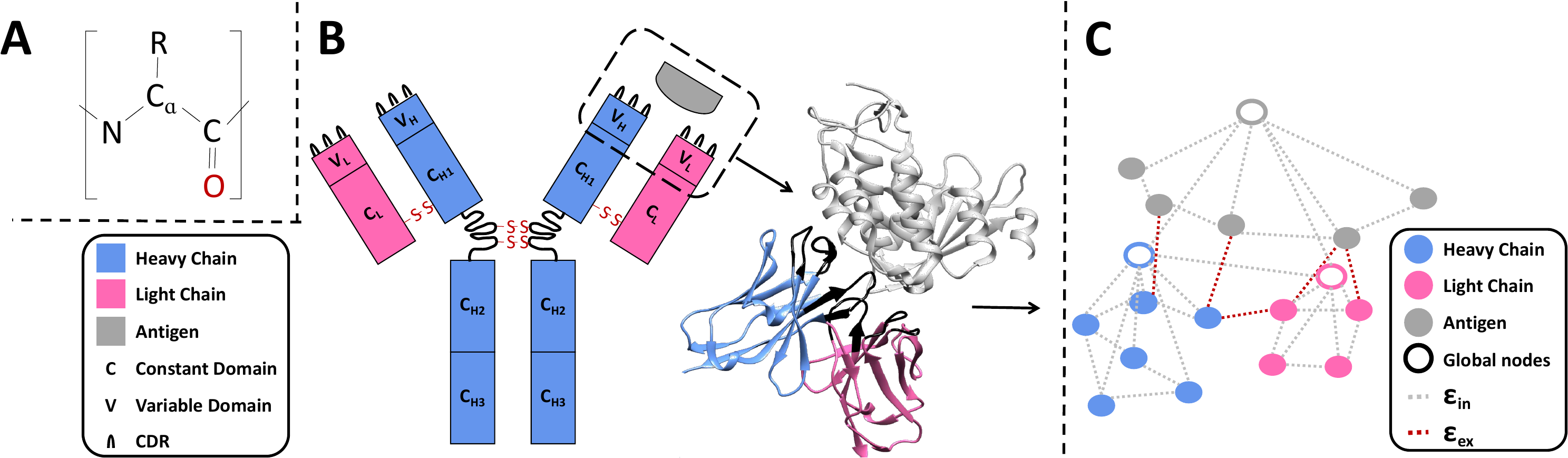}
      \vskip -0.1in
      \caption{(\textbf{A}) The structure of a residue, where the backbone atoms we use are $\text{N}, \text{C}_\alpha, \text{C}, \text{O}$. (\textbf{B}) The structure of an antibody which is symmetric and Y-shaped, and we focus on the three versatile CDRs on the variable domain of the heavy chain. (\textbf{C}) Schematic graph construction for the antigen-antibody complex, with global nodes, internal context edges $\gE_{in}$ and external interaction edges $\gE_{ex}$.}
      \label{fig:antibody}
      \vskip -0.1in
  \end{figure}

  We represent each antibody-antigen complex as a graph of three spatially aggregated components, denoted as $ \gG= (\gV\coloneqq\{\gV_H, \gV_L, \gV_A\}, \gE\coloneqq\{\gE_{\text{in}}, \gE_{\text{ex}}\})$. Here, the components $\gV_H, \gV_L, \gV_A$ correspond to the nodes (\emph{i.e.} the residues) of the heavy chain, the light chain and the antigen, respectively; $\gE_{\text{in}}$ and $\gE_{\text{ex}}$ separately contain internal edges within each component and external edges across components. To be specific, each node in $\gV$, \emph{i.e.}, $v_i=(h_i,  \mZ_i)$ is represented as a trainable feature embedding vector $h_i=s_{a_i} \in\R^{d_a}$ according to its amino acid type $a_i$ and a matrix of coordinates $\mZ_i \in \R^{3 \times m}$ consisting of $m$ backbone atoms. In our case, we set $m = 4$ by choosing 4 backbone atoms $\{\text{N}, \text{C}_\alpha, \text{C}, \text{O}\}$, where $\text{C}_\alpha$ denotes the alpha carbon of the residue and others refer to the atoms composing the peptide bond (Figure~\ref{fig:antibody}). We denote the residues in CDRs to be generated as $\mathcal{V}_C = \{v_{c_1}, v_{c_2}, ..., v_{c_{n(c)}}\}$, which is a subset of $\mathcal{V}_H$. Since the information of each $v_{c_i}$ is unknown in the first place, we initialized its input feature with a mask vector and the coordinates according to the even distribution between the residue right before CDRs (namely, $v_{c_1 - 1}$) and the one right after CDRs (namely, $v_{c_{n(c)} + 1}$). 
  In our main experiments (\textsection~\ref{sec:exp}), we select the 48 residues of the antigen closest to the antibody in terms of the $\text{C}_\alpha$ distance as the epitope like~\cite{jin2021iterative}. Instead of such hand-crafted residue selection, in Appendix~\ref{app:full_antigen}, we also incorporate the full antigen and let our method determine the epitope automatically, where the efficacy of our method is still exhibited. 
  
  \paragraph{Edge construction}
  We now detail how to construct the edges.
  For the internal edges, $\gE_{\text{in}}$ is defined as the edges connecting each pair of nodes within the same component if the spatial distance in terms of $C_\alpha$ is below a cutoff distance $c_1$. Note that adjacent residues in a chain are spatially close and we always include the edge between adjacent residues in $\gE_{\text{in}}$. In addition, we assign distinct edge types by setting $e_{ij}=1$ for those adjacency residues and $e_{ij}=0$ for others, to incorporate the 1D position information. For the external edges $\mathcal{E}_{\text{ex}}$, they are derived if the nodes from two different components have a distance less than a cutoff $c_2$ ($c_2>c_1$).
  It is indeed necessary to separate internal and external interactions because their distance scales are very different. The external connections actually represent the interface between different chains, which dominates the binding affinity~\citep{chakrabarti2002dissecting}, and they are formed mainly through inter-molecular forces instead of chemical bonds that form the internal connections within chains~\citep{yan2008characterization}. Note that all edges are constructed without the information of ground-truth CDR positions.

  \paragraph{Global nodes}
  The shape of CDR loops is closely related to the conformation of the framework region~\citep{baran2017principles}. Therefore, to make the generated CDRs aware of the entire context of the chain they are in, we additionally insert a global node into each component, by connecting it to all other nodes in the component. Besides, the global nodes of different components are linked to each other, and all edges induced by the global nodes are included in $\gE_{\text{in}}$. The coordinates of a global node are given by the mean of all coordinates of the variable domain of the corresponding chains.

  \paragraph{Task formulation}
  Given the 3D antibody-antigen complex graph $\gG = (\gV, \gE)$, we seek a 3D equivariant translator $f$ to generate the amino acid type and 3D conformation for each residue in CDRs $\gV_C$. Distinct from 1D sequence translation in conventional antibody design, our task requires to output 3D information, and more importantly, we emphasize equivariance to reflect the symmetry of our 3D world---the output of $f$ will translate/rotate/reflect in the same way as its input. We now present how to design $f$ in what follows. 
  
  \begin{figure}[t!]
    \vskip -0.2in
    \centering
    \includegraphics[width=1.0\textwidth]{./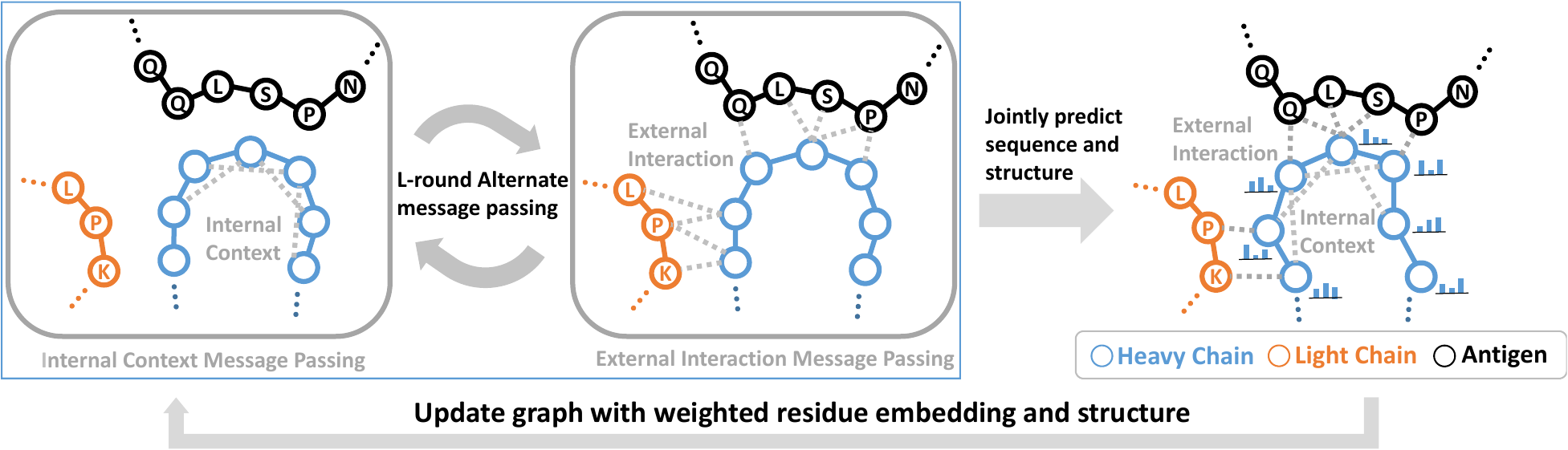}
    \vskip -0.1in
    \caption{The overview of MEAN and the progressive full-shot decoding. In each iteration, we alternate the internal context encoding and external interaction encoding over $L$ layers, and then update the input features and coordinates of CDRs for the next iteration with the predicted values.}
    \label{fig:model}
    \vskip -0.2in
  \end{figure}

\subsection{MEAN: Multi-channel Equivariant Attention Network}
  \label{sec:mean}
  
  To derive an effective translator, it is crucial to capture the 3D interactions of the residues in different chains. The message passing mechanism in $E(3)$-equivariant GNNs~\citep{satorras2021n,huang2022equivariant} will fulfil this purpose. Particularly, we develop the Multi-channel Equivariant Attention Network (MEAN) to characterize the geometry and topology of the input antibody-antigen complex.  
  Each layer of MEAN alternates between the two modules: internal context encoder and external interaction encoder, which is motivated by the biological insight that the external interactions between antibodies and antigens are different from those internal interactions within each heavy/light chain.
  After several layers of the message passing, the node representations and coordinates are transformed into the predictions by an output module. Notably, all modules are $E(3)$-equivariant.
  
  \paragraph{Internal context encoder} 
  Similar to GMN~\citep{huang2022equivariant}, we extend EGNN~\citep{satorras2021n} from one single input vector to multichannel coordinates, since each residue is naturally represented by multiple backbone atoms. Suppose in layer $l$ the node features are $\{h_i^{(l)}|i=1,2,...,n\}$ and the coordinates are $\{\mZ_i^{(l)}|i=1,2,...,n\}$. We denote the relative coordinates between node $i$ and $j$ as $\mZ^{(l)}_{ij} = \mZ^{(l)}_{i} - \mZ^{(l)}_{j}$. Then, the information of each node is updated in the following form. 
  \begin{align}
  \label{eq:internal-1}
    m_{ij} & = \phi_m(h_i^{(l)}, h_j^{(l)}, \frac{(\mZ_{ij}^{(l)})^{\top}\mZ_{ij}^{(l)}}{\|(\mZ_{ij}^{(l)})^{\top}\mZ_{ij}^{(l)}\|_F},e_{ij}), \\
    \label{eq:internal-2}
    h_i^{(l+0.5)} & = \phi_h(h_i^{(l)}, \sum\nolimits_{j\in\gN(i|\gE_{\text{in}})} m_{ij}), \\
    \label{eq:internal-3}
    \mZ_i^{(l+0.5)} & = \mZ_i^{(l)} + \frac{1}{|\mathcal{N}(i|\gE_{\text{in}})|}\sum\nolimits_{j\in \mathcal{N}(i|\gE_{\text{in}})}\mZ_{ij}^{(l)}\phi_Z(m_{ij}), 
  \end{align}
  where $\mathcal{N}(i|\gE_{\text{in}})$ denotes the neighbors of node $i$ regarding the internal connections $\mathcal{E}_{\text{in}}$, $\|\cdot\|_F$ returns the Frobenius norm, and $\phi_m, \phi_x, \phi_h, \phi_Z$ 
  are all Multi-Layer Perceptons (MLPs)~\citep{gardner1998artificial}. Basically, $m_{ij}$ gathers the $E(3)$-invariant messages from all neighbors; then it is used to update $h_i$ via $\phi_h$ and $\mZ_i$ via $\phi_Z$ that is additionally left multiplied with $\mZ_{ij}^{(l)}$ to keep the direction information. As $\gE_{\text{in}}$ also contains the connections between global nodes in different components, the encoder here actually involves inter-component message passing, although in a global sense. We use the superscript $(l+0.5)$ to indicate the features and coordinates that will be further updated by the external attentive encoder in this layer.
  
  \paragraph{External attentive encoder} 
  This module exploits the graph attention mechanism~\citep{velivckovic2017graph} to better describe the correlation between the residues of different components, but different from~\citet{velivckovic2017graph}, we design a novel $E(3)$-equivariant graph attention scheme based on the multichannel scalarization in the above internal context encoder. Formally, we have:
  \begin{align}
    & \alpha_{ij} = \frac{\exp(q_i^\top k_{ij})}{\sum\nolimits_{j\in\gN(i|\gE_{\text{ex}})}\exp(q_i^\top k_{ij})}, \\
    & h_i^{(l+1)} = h_i^{(l+0.5)} + \sum\nolimits_{j\in\mathcal{N}(i|\gE_{\text{ex}})}\alpha_{ij} v_{ij}, \\
    & \mZ_i^{(l+1)} = \mZ_i^{(l+0.5)} + \sum\nolimits_{j\in \gN(i|\gE_{\text{ex}})}\alpha_{ij} \mZ_{ij}^{(l+0.5)}\phi_Z(v_{ij}),
  \end{align}
  where, $\mathcal{N}(i|\gE_{\text{ex}})$ denotes the neighbors of node $i$ defined by the external interactions $\mathcal{E}_{\text{ex}}$; $q_i$, $k_{ij}$, and $v_{ij}$ are the query, key, and value vectors, respectively, and $\alpha_{ij}$ is the attention weight from node $j$ to $i$. Specifically,
  $q_i = \phi_q(h_i^{(l+0.5)})$, $k_{ij} = \phi_k(\frac{(\mZ_{ij}^{(l+0.5)})^\top\mZ_{ij}^{(l+0.5)}}{\|(\mZ_{ij}^{(l+0.5)})^\top\mZ_{ij}^{(l+0.5)}\|_F}, h_j^{(l+0.5)})$, and $v_{ij} = \phi_v(\frac{(\mZ_{ij}^{(l+0.5)})^\top\mZ_{ij}^{(l+0.5)}}{\|(\mZ_{ij}^{(l+0.5)})^\top\mZ_{ij}^{(l+0.5)}\|_F}, h_j^{(l+0.5)})$ are all $E(3)$-invariant, where the functions $\phi_q, \phi_k, \phi_v$ are MLPs.
  
  \paragraph{Output module} 
  After $L$ layers of the alternations between the last two modules, we further conduct Eq.~(\ref{eq:internal-1}-\ref{eq:internal-3}) to output
 the hidden feature $\tilde{h}_i$ and coordinates $\tilde{\mZ_i}$. To predict the probability of each amino acid type, we apply a SoftMax on  $\tilde{h}_i$: $p_i = \mathrm{Softmax}(\tilde{h}_i)$, where $p_i\in\R^{n_a}$ is the predicted distribution over all amino acid categories.
  
  A desirable property of MEAN is that it is $E(3)$-equivariant. We summarize it as a formal theorem below, with the proof deferred to Appendix~\ref*{app:proof}.  
  
  \begin{theorem}
  We denote the translation process by MEAN as $\{(p_i, \tilde{\mZ}_i)\}_{i\in\gV_C}=f(\{h_i^{(0)},\mZ_i^{(0)}\}_{i\in\gV})$, then $f$ is $E(3)$-equivariant. In other words, for each transformation $g\in E(3)$, we have $\{(p_i, g\cdot\tilde{\mZ}_i)\}_{i\in\gV_C}=f(\{h_i^{(0)},g\cdot\mZ_i^{(0)}\}_{i\in\gV})$, where the group action $\cdot$ is instantiated as $g\cdot\mZ\coloneqq\mO\mZ$ for orthogonal transformation $\mO\in\R^{3\times3}$ and  $g\cdot\mZ\coloneqq\mZ+\vt$ for translation transformation $\vt\in\R^3$.
  \end{theorem}

\subsection{Progressive Full-Shot Decoding}
  \label{sec:ppdec}
  
  Traditional methods (such as RefineGNN~\citep{jin2021iterative}) unravel the CDR sequence in an autoregressive way: generating one amino acid at one time. While such strategy is able to reduce the generation complexity, it inevitably incurs expensive computing and memory overhead, and will hinder the training owing to the vanishing gradient for long CDR sequences. It also acumulates errors during the inference stage. Here, thanks to the rich expressivity of MEAN, we progressively generate the CDRs over $T$ iterations ($T$ is much smaller than the length of the CDR sequences), and in each iteration, we predict the amino acid type and 3D coordinates of all the nodes in $\gV_C$ at once. We call our scheme as \emph{full-shot decoding} to distinguish it from previous autoregressive approaches.  
  
  To be specific, given the CDRs' amino acid distribution  and  conformation $\{p_i^{(t)}, \tilde{\mZ}_i^{(t)}\}_{i\in\gV_C}$  from iteration $t$, we first update the embeddings of all nodes: $h'_{i} = \sum_{j=1}^{n_a} p_i^{(t)}(j) s_j,\forall i\in\gV_C$, where $p^{(t)}_i(j)$ returns the probability for the $j$-th class and $s_j$ is the corresponding learnable embedding as defined before. Such weighted strategy leads to less accumulated error during inference compared to the maximum selection counterpart. We then replace the CDRs with the new values $\{h'_i, \tilde{\mZ}_i^{(t)}\}_{i\in\gV_C}$, and denote the new graph as $\gG^{(t+1)}$. The edges are also constructed dynamically according to the new graph. We update the next iteration as $\{p_i^{(t+1)}, \tilde{\mZ}_i^{(t+1)}\}_{i\in\gV_C} = \mathrm{MEAN}(\gG^{(t+1)})$. 
  
  For sequence prediction, we exert supervision for each node at each iteration:
  \begin{align}
      \mathcal{L}_{\mathrm{seq}} = \frac{1}{T} \sum_{t} \frac{1}{|\gV_C|} \sum_{i\in\gV_C} \ell_{ce}(p^{(t)}_i, \hat{p}_i),
  \end{align}
  where $\ell_{ce}$ denotes the cross entropy between the predicted distribution $p^{(t)}_i$ and the true one $\hat{p}_i$. 
  
  For structure prediction, we only exert supervision on the output iteration. Since there are usually noises in the coordination data, we adopt the Huber loss~\citep{huber1992robust} other than the common MSE loss to avoid numerical instability (further explanation can be found in Appendix~\ref*{app:huber}):
  \begin{align}
      \mathcal{L}_{\mathrm{struct}} = \frac{1}{|\gV_C|}\sum_{i\in\gV_C} \ell_{\mathrm{huber}}(\tilde{\mZ}_{i}^{(T)}, \hat{\mZ}_{i}),
  \end{align}
  where $\hat{\mZ}_i$ is the label. One benefit of the structure loss is that it conducts directly in the coordinate space, and it is still $E(3)$-invariant as our model is $E(3)$-equivariant. This is far more efficient than the loss function used in RefineGNN~\citep{jin2021iterative}. To ensure invariance, RefineGNN should calculate over pairwise distance and angle other than coordinates, which is tedious but necessary since it can only perceive the input of node and edge features after certain invariant transformations. Finally, we balance the above two losses with $\lambda$ to form $\mathcal{L} = \mathcal{L}_{\mathrm{seq}} + \lambda \mathcal{L}_{\mathrm{struct}}$.\par

\section{Experiments}
    \label{sec:exp}
    We assess our model on the three challenging tasks: \textbf{1.} The generative task on the Structural Antibody Database~\citep{dunbar2014sabdab} in~\textsection~\ref{sec:dist}; 
    \textbf{2.}
    Antigen-binding CDR-H3 design from a curated benchmark of 60 diverse antibody-antigen complexes ~\citep{adolf2018rosettaantibodydesign} in~\textsection~\ref{sec:design};
    \textbf{3.} Antigen-antibody binding affinity optimization on Structural Kinetic and Energetic database of Mutant Protein Interactions~\citep{jankauskaite2019skempi} in~\textsection~\ref{sec:opt}. We also present a promising pipeline to apply our model in scenarios where the binding position is unknown in ~\textsection~\ref{sec:docked_template}.
    
    Four baselines are selected for comparison. For the first one, we use the \textbf{LSTM}-based approach by~\cite{saka2021antibody, akbar2022silico} to encode the context of the heavy chain and another LSTM to decode the CDRs. We implement the cross attention between the encoder and the decoder, but utilize the sequence information only. Built upon LSTM, we further test \textbf{C-LSTM} to consider the entire context of the antibody-antigen complex, where each component is separated by a special token. \textbf{RefineGNN}~\citep{jin2021iterative} is related to our method as it also considers the 3D geometry for antibody generation, but distinct from our method it is only $E(3)$-invariant and autoregressively generate the amino acid type of each residue. Since its original version only models the heavy chain, we extend it by accommodating the whole antibody-antigen complex, which is denoted as \textbf{C-RefineGNN}; concretely, each component is identified with a special token in the sequence and a dummy node in the structure. As denoted before, we term our model as \textbf{MEAN}.
    We train each model for 20 epochs and select the checkpoint with the lowest loss on the validation set for testing. We use the Adam optimizer with the learning rate 0.001. For MEAN, we run 3 iterations for the progressive full-shot decoding. More details are provided in Appendix~\ref*{app:hparam}. For LSTM and RefineGNN, we borrow their default settings and source codes for fair comparisons.
    
    \subsection{Sequence and Structure Modeling}
  \label{sec:dist}
  For quantitative evaluation, we employ Amino Acid Recovery (AAR), defined as the overlapping rate between the predicted 1D sequences and ground truths, and Root Mean Square Deviation (RMSD) regarding the 3D predicted structure of CDRs. Thanks to its inherent equivariance, our model can directly calculate RMSD of coordinates, unlike other baselines that resort to the Kabsch technique~\cite{kabsch1976solution} to 
  align the predicted and true coordinates prior to the RMSD computation. Our model requires each input complex to be complete (consisting of heavy chains, light chains, and antigens). Hence, we choose 3,127 complexes from the Structural Antibody Database~(\citealp{dunbar2014sabdab}, SAbDab) and remove other illegal datapoints that lack light chain or antigen. All selected complexes are renumbered under the IMGT scheme~\citep{lefranc2003imgt}. As suggested by~\citet{jin2021iterative}, we split the dataset into training, validation, and test sets according to the clustering of CDRs to maintain the generalization test. In detail, for each type of CDRs, we first cluster the sequences via MMseqs2~\citep{steinegger2017mmseqs2} 
  that assigns the antibodies with CDR sequence identity above 40\% to the same cluster, where the BLOSUM62 substitution matrix~\citep{henikoff1992amino} is adopted to calculate the sequence identity. The total numbers of clusters for CDR-H1, CDR-H2, and CDR-H3 are 765, 1093, and 1659, respectively. Then we split all clusters into training, validation, and test sets with a ratio of 8:1:1.  We conduct 10-fold cross validation to obtain reliable results. Further details are provided in Appendix~\ref*{app:sabdab_split}.
  
  \begin{table}[htbp]
  \vskip -0.1in
  \caption{Top: 10-fold cross validation mean (standard deviation) for 1D sequence and 3D structure modeling on SAbDab (\textsection \ref{sec:dist}). Bottom: evaluations under the setting of~\cite{jin2021iterative}, denoted with a superscript $\ast$. }
  \vskip -0.1in
  \label{tab:seq_3d}
  \begin{center}
  \scalebox{0.9}{
  \begin{tabular}{ccccccc}
  \toprule
  \multirow{2}{*}{Model} & \multicolumn{2}{c}{CDR-H1}                      & \multicolumn{2}{c}{CDR-H2}                      & \multicolumn{2}{c}{CDR-H3}                      \\ \cline{2-7} 
                         & AAR                    & RMSD                   & AAR                    & RMSD                   & AAR                    & RMSD                   \\ \midrule
  LSTM                   & 40.98$\pm$5.20\%       & -                      & 28.50$\pm$1.55\% & -                      & 15.69$\pm$0.91\%       & -                      \\
  C-LSTM                 & 40.93$\pm$5.41\%       & -                      & 29.24$\pm$1.08\% & -                      & 15.48$\pm$1.17\%       & -                      \\
  RefineGNN              & 39.40$\pm$5.56\%       & 3.22$\pm$0.29          & 37.06$\pm$3.09\% & 3.64$\pm$0.40          & 21.13$\pm$1.59\%       & 6.00$\pm$0.55          \\
  C-RefineGNN            & 33.19$\pm$2.99\%       & 3.25$\pm$0.40          & 33.53$\pm$3.23\%       & 3.69$\pm$0.56          & 18.88$\pm$1.37\%       & 6.22$\pm$0.59          \\ 
  MEAN                   & \textbf{58.29$\pm$7.27\%} & \textbf{0.98$\pm$0.16} & \textbf{47.15$\pm$3.09\%} & \textbf{0.95$\pm$0.05} & \textbf{36.38$\pm$3.08\%} & \textbf{2.21$\pm$0.16} \\ \hline\hline
  LSTM$^*$               & 28.02\%                &  -                     & 24.39\%       & -                      & 18.92\%                & -                      \\
  RefineGNN$^*$          & 30.07\%                & 0.97                   & 27.70\%       & \textbf{0.73}      & 27.60\%                & \textbf{2.12}                   \\
  MEAN$^*$               & \textbf{62.78\%}       & \textbf{0.94}          & \textbf{52.04\%}   & 0.89         & \textbf{39.87\%}           & 2.20          \\ \bottomrule
  \end{tabular}}
  \end{center}
  \vskip -0.2in
  \end{table}
  \paragraph{Results}
  
  Table~\ref{tab:seq_3d} (Top) demonstrates that our MEAN significantly surpasses all other methods in terms of both the 1D sequence and 3D structure modeling, which verifies the effectiveness of MEAN in modeling the underlying distribution of the complexes. When comparing LSTM/RefineGNN with C-LSTM/C-RefineGNN, it is observed that further taking the light chain and the antigen into account sometimes leads to even inferior performance. This observation suggests it will cause a negative effect if the interdependence between CDRs and the extra input components is not correctly revealed. On the contrary, MEAN is able to deliver consistent improvement when enriching the context, which will be demonstrated later in~\textsection~\ref{sec:ablation}.  We also compare MEAN with LSTM and RefineGNN on the same split as RefineGNN~\citep{jin2021iterative}, denoted as MEAN$^\ast$, LSTM$^\ast$, and RefineGNN$^\ast$, respectively. Specifically, both LSTM$^\ast$ and RefineGNN$^\ast$ are trained on the full training set, while our MEAN only accesses its subset composed of complete complexes, which is approximately 52\% of the full dataset. All three models are evaluated on the same subset of the test set for fair comparisons. Even so, MEAN still outperforms the other two methods remarkably in terms of AAR and achieves comparable results regarding RMSD in Table~\ref{tab:seq_3d} (Bottom), validating its strong generalization ability.

\subsection{Antigen-Binding CDR-H3 Design}
    \label{sec:design}
    We perform fine-grained validation on designing CDR-H3 that binds to a given antigen. In addition to AAR and RMSD, we further adopt TM-score~\citep{zhang2004scoring,xu2010significant}, which calculates the global similarity between two protein structures and ranges from 0 to 1, to evaluate how well the CDRs fit into the frameworks.
    We use the official implementation\footnote{\url{https://zhanggroup.org/TM-score/}} to calculate TM-score. For autoregressive baselines, we follow \citet{jin2021iterative} to alleviate the affects of accumulated errors by generating 10,000 CDR-H3s and select the top 100 candidates with the lowest PPL for evaluation. Notably, our MEAN reduces accumulated errors significantly by the progressive full-shot decoding strategy, therefore can directly generate the candidates. We also include RosettaAD~\citep{adolf2018rosettaantibodydesign}, a widely used baseline of conventional approach,  for comparison. We benchmark all methods with the 60 diverse complexes carefully selected by \citet{adolf2018rosettaantibodydesign} (RAbD). The training is still conducted on the SAbDab dataset used in~\textsection~\ref{sec:dist}, but we eliminate all antibodies from SAbDab whose CDR-H3s share the same cluster as those in RAbD to avoid any potential data leakage. We then divide the remainder into the training and validation sets by a ratio of 9:1. The numbers of clusters/antibodies are 1,443/2,638 for training and 160/339 for validation.

    \paragraph{Results} As shown in Table \ref{tab:rabd}, MEAN outperforms all baselines by a large margin in terms of both AAR and TM-score. Particularly on TM-score, the value by MEAN approaches above 0.99, implying that 
    the designed structure is almost the same as the original one. To better show this, we visualize an example in Figure~\ref{fig:rabd_sample}, where the generated fragment by MEAN almost overlaps with the ground truth, while the result of RefineGNN exhibits an apparent bias.   
    \begin{figure}[t]
        \vskip -0.2in
        \centering
        \includegraphics[width=0.8\textwidth]{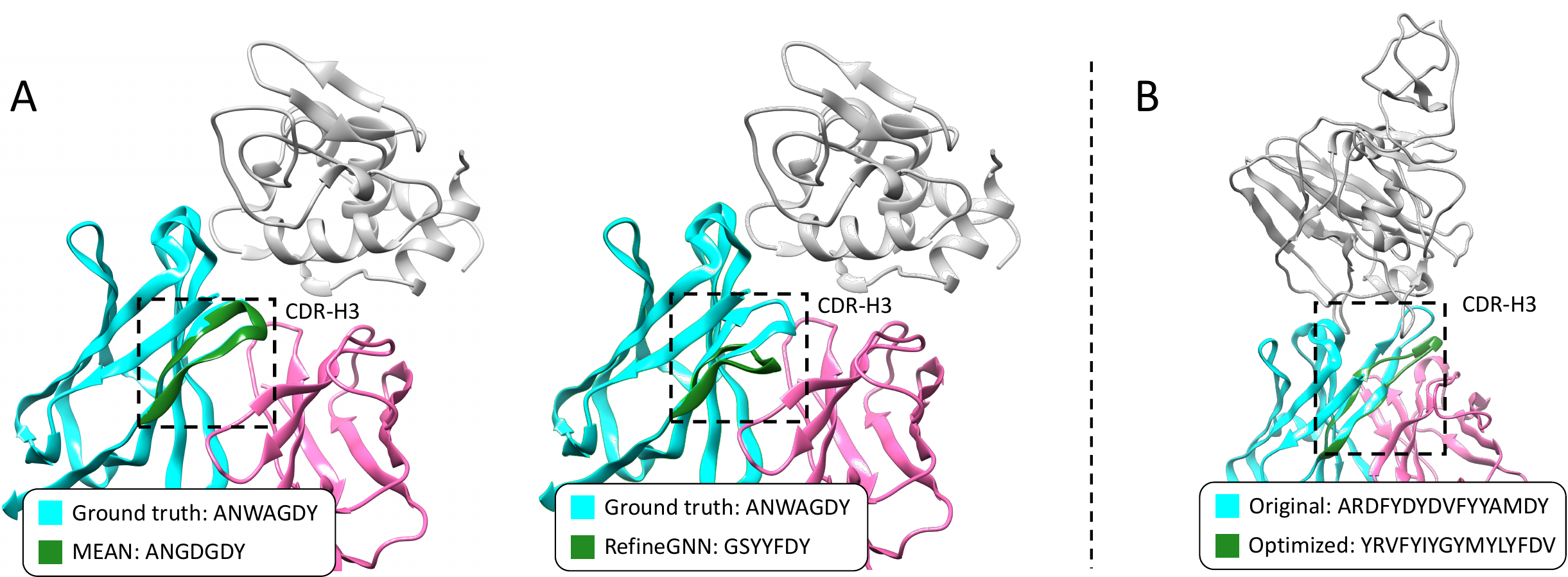}
        \vskip -0.1in
        \caption{(\textbf{A}) The structure of two antigen-binding CDR-H3s (PDB: 1ic7) designed by MEAN (left, RMSD=0.49) and RefineGNN (right, RMSD=3.04). 
        The sequences are also provided in the annotations.  (\textbf{B}) The antibody-antigen complex structure after affinity optimization (PDB: 2vis, $\Delta\Delta$G = -5.13), with the original and optimized sequences displayed in the annotations.}
        \label{fig:rabd_sample}
        \vskip -0.2in
    \end{figure}
    
\subsection{Affinity Optimization}
    \label{sec:opt}

    It is crucial to optimize various properties like 
    binding affinity of antibodies for therapeutic purposes. This can be formulated as a search problem over the intrinsic space of generative models.
    In our case, we jointly optimize the sequence and structure of CDR-H3 to improve the binding affinity of any given antibody-antigen complex. For evaluation, we employ the geometric network from \cite{shan2022deep} to predict the change in binding energy ($\Delta\Delta$G) after optimization. Particularly, we leverage the official checkpoint\footnote{\url{https://github.com/HeliXonProtein/binding-ddg-predictor}} that is trained on the Structural Kinetic and Energetic database of Mutant Protein Interactions V2.0~(\citealp{jankauskaite2019skempi}, SKEMPI V2.0). 
    $\Delta\Delta$G is calibrated under the unit of kcal/mol, and lower $\Delta\Delta$G indicates better binding affinity. Since all the methods model backbone structures only, we use Rosetta~\citep{alford2017rosetta} to do sidechain packing before affinity prediction. To ensure the expected generalizability, we select a total of 53 antibodies from its training set (i.e. SKEMPI V2.0) for affinity optimization. Besides, we split SAbDab (pre-processed by the strategy in~\textsection~\ref{sec:dist}) into training and validation sets in a ratio of 9:1 for pretraining the model.
   
    Following \citet{jin2021iterative}, we exploit the Iterative Target Augmentation~(\citealp{yang2020improving}, ITA) algorithm to tackle the optimization problem. Since the original algorithm is designed for discrete properties, we adapt it for compatibility with our affinity scorer of continuous values. Please refer to Appendix~\ref*{app:ita_continuous} for detailed description of the adaption. During the process, we discard any unrealistic candidate with PPL above 10 in accordance with~\cite{jin2021iterative}. It is observed that our model learns the constraints of net charge, motifs, and repeating amino acids~\citep{jin2021iterative} implicitly, therefore we do not need to impose them on our model explicitly.
    
    \begin{minipage}{\textwidth}
    \vskip -0.1in
    \makeatletter\def\@captype{table}\makeatother\caption{Left: Amino acid recovery (AAR), TM-score and RMSD for CDR-H3 design on RAbD benchmark (\textsection~\ref{sec:design}). Middle: Average affinity change after optimization textsection~\ref{sec:opt}). Right: Average CDR lengths and speedups by our full-shot decoding compared to the iterative-refinement decoding.}
        \begin{center}
        \begin{minipage}[t]{0.45\textwidth}
        \centering
        \scalebox{0.8}{\begin{tabular}{cccc}
        \toprule
        Model       & AAR              & TM-score  & RMSD      \\ \midrule
        RosettaAD   & 22.50\%          & 0.9435    & 5.52   \\
        LSTM        & 22.36\%          & -         & -      \\
        C-LSTM      & 22.18\%          & -         & -      \\
        RefineGNN   & 29.79\%          & 0.8303    & 7.55      \\
        C-RefineGNN & 28.90\%          & 0.8317    & 7.21      \\ \hline
        MEAN        & \textbf{36.77\%} & \textbf{0.9812} & \textbf{1.81} \\ \bottomrule
        \end{tabular}}
        \label{tab:rabd}
        \end{minipage}
        \begin{minipage}[t]{0.25\textwidth}
        \centering
        \scalebox{0.8}{\begin{tabular}{cc}
        \toprule
        Model       & $\Delta\Delta$G            \\ \midrule
        Random      & +1.52          \\
        LSTM        & -1.48          \\
        C-LSTM      & -1.83          \\
        RefineGNN   & -3.98          \\
        C-RefineGNN & -3.79          \\ \hline
        MEAN        & \textbf{-5.33} \\ \bottomrule
        \end{tabular}}
        \label{tab:skempi}
        \end{minipage}
        \begin{minipage}[t]{0.2\textwidth}
        \centering
        \scalebox{0.67}{\begin{tabular}{ccc}
        \toprule
        \multicolumn{2}{c}{length}           \\ \midrule
        CDR-H1         & ~~7.9               \\
        CDR-H2         & ~~7.6               \\
        CDR-H3         & 14.1                \\ \hline\hline
        \multicolumn{2}{c}{speedup (train / infer)} \\ \midrule
        CDR-H1         & 2.1x / 1.5x         \\
        CDR-H2         & 2.1x / 1.6x         \\
        CDR-H3         & 4.1x / 2.8x         \\ \bottomrule
        \end{tabular}}
        \label{tab:cdr_len}
        \end{minipage}
        \end{center}
    \end{minipage}
    
    \textbf{Results} As shown in Table \ref{tab:skempi} (middle), our MEAN models achieve obvious progress towards discovering antibodies with better binding affinity. This further validates the advantage of explicitly modeling the interface with MEAN. Moreover, we provide the predicted $\Delta\Delta$G of mutating the CDRs to random sequences, denoted as Random, for better interpretation of the results. We provide further interpretation in Appendix~\ref*{app:predictor_affinity}. We also provide a visualization example in Figure~\ref{fig:rabd_sample} (B), which indicates our MEAN does produce a novel CDR-H3 sequence/structure, with improved affinity.

\subsection{CDR-H3 Design with Docked Template}
\label{sec:docked_template}
We further provide a possible pipeline to utilize our model in scenarios when the binding complex is unknown. Specifically, for antigens from RAbD~\citep{adolf2018rosettaantibodydesign}, we aim at generating binding antibodies with high affinity. To this end, we first select an antibody from the database and remove its CDR-H3, then we use HDOCK~\citep{yan2017hdock} to dock it to the target antigen to obtain a template of antibody-antigen complex. With this template, we employ our model to generate antigen-binding CDR-H3s in the same way as~\textsection~\ref{sec:design}. 
To alleviate the risk of docking inaccuracy, we compose 10 such templates for each antigen and retain the highest scoring one in the subsequent generation. We first refine the generated structure with OpenMM~\citep{eastman2017openmm} and Rosetta~\citep{alford2017rosetta}, and then use the energy functions in Rosetta to measure the binding affinity. The comparison of the affinity distribution between the generated antibodies by our method, those by C-RefineGNN, and the original ones in RAbD is shown in Figure~\ref{fig:docked_template} (B). Obviously, the antibodies designed by our MEAN exhibit higher predicted binding affinity. We also present a tightly binding example in Figure~\ref{fig:docked_template} (A).

\begin{figure}[htbp]
    \centering
    \vskip -0.1in
    \includegraphics[width=.8\textwidth]{./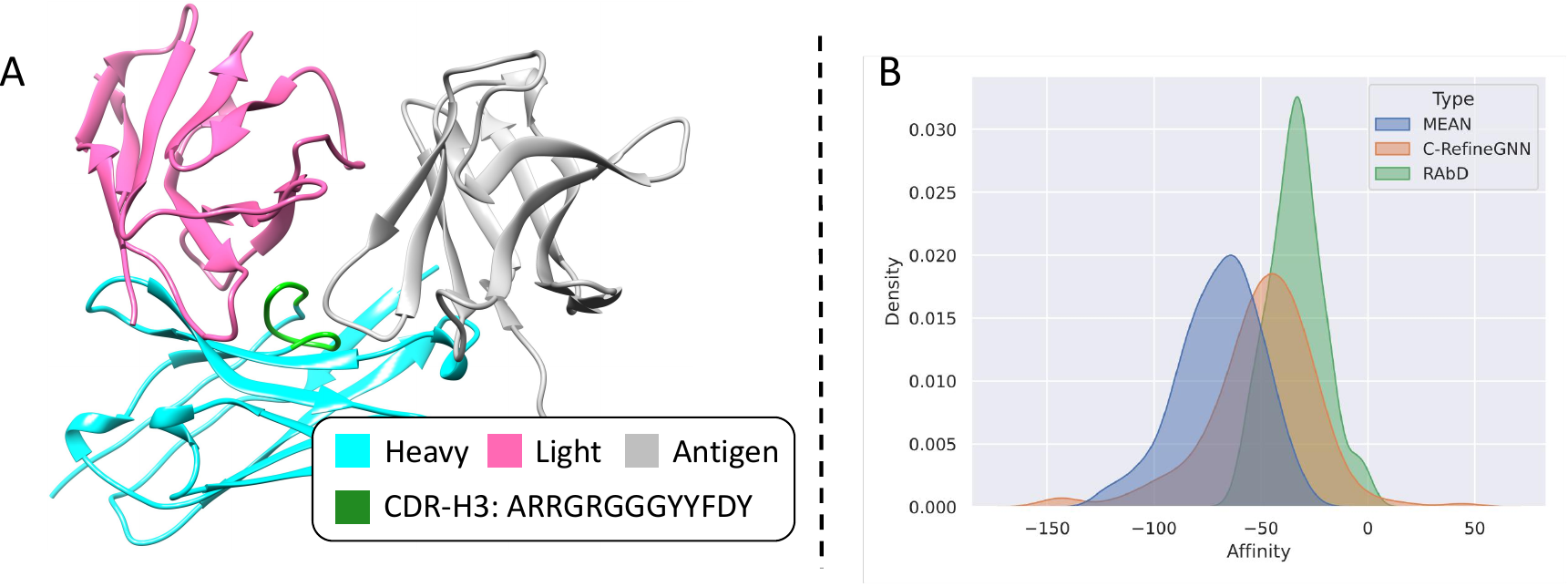}
    \vskip -0.2in
    \caption{(\textbf{A}) CDR-H3 designed by MEAN based on a docked template targeting the antigen from PDB 5b8c (Affinity=-59.6). (\textbf{B}) The affinity distribution of the antibodies.}
    \label{fig:docked_template}
    \vskip -0.1in
\end{figure}

\section{Analysis}
\vskip -0.1in
\label{sec:ablation}
% \paragraph{Ablation Study}
We test if each proposed technique is necessary in MEAN. Table~\ref{tab:ablation} shows that the removal of either the global nodes or the attention mechanism induces performance detriment. This is reasonable since the global nodes transmit information within and between components globally, and the attentive module concentrates on the local information around the interface of different components. In practice, the attentive module also provides interpretability over the significance of pairwise residue interactions, as illustrated in Appendix~\ref*{app:att_vis}. In addition, it is observed that only using the heavy chain weakens the performance apparently, and fails to derive feasible solution for the affinity optimization task, which empirically supports the necessity of inputting antigens and light chains in MEAN. Moreover, we implement a variant of MEAN by replacing the progressive full-shot decoding with the iterative refinement operation used in RefineGNN, whose performance is worse than MEAN. As discussed before, our full-shot decoding is much more efficient than the iterative refinement process, since the number of iterations in MEAN is 3 which is much smaller than that of the refinement-based variant. As reported in Table~\ref{tab:cdr_len} (right), our method speeds up  approximately 2 to 5 times depending on the lengths of the CDR sequences. We also analyze the complexity, MEAN injected with randomness, and the progressive decoding process in Appendix~\ref{app:complexity}, \ref{app:rand_mean}, and \ref{app:different_rounds}, respectively.

\begin{table}[htbp]
\vskip -0.1in
\caption{Ablations of  MEAN.}
\label{tab:ablation}
\begin{center}
\scalebox{0.83}{\begin{tabular}{ccccccccc}
\toprule
\multirow{2}{*}{Model} & \multicolumn{2}{c}{SAbDab (CDR-H3)}             &  & \multicolumn{3}{c}{RAbD}           &  & SKEMPI         \\ \cline{2-3} \cline{5-7} \cline{9-9} 
                       & AAR                    & RMSD                   &  & AAR              & TM-score      & RMSD  &  & $\Delta\Delta$G\\ \midrule
w/o global node        & 33.89$\pm$6.81\%       & 2.62$\pm$1.03          &  & 35.07\%          & 0.9798    &\textbf{1.81}  &  & -3.68          \\
w/o attention        & 35.89$\pm$3.44\%       & 2.24$\pm$0.31          &  & 36.56\%          & 0.9745    & 1.95     &  & failed         \\
heavy chain only       & 33.62$\pm$6.20\%       & 2.23$\pm$0.17          &  & 35.82\%          & 0.9728     &    1.97 &  & failed         \\
iterative refinement   & 25.85$\pm$1.88\% & 3.61$\pm$0.75          &  &   35.19\%          & 0.9629   & 3.20      &  & -5.19          \\ \hline
MEAN                   & \textbf{36.38$\pm$3.08\%}          & \textbf{2.21$\pm$0.16} &  & \textbf{36.77\%}       & \textbf{0.9812} & \textbf{1.81} & & \textbf{-5.33} \\ \bottomrule
\end{tabular}}
\end{center}
\vskip -0.2in
\end{table}

\section{Conclusion}
In this paper we formulate antibody design as translation from the the entire context of antibody-antigen complex to versatile CDRs. We propose multi-channel equivariant attention network (MEAN) to identify and encode essential local and global information within and between different chains. We also propose progressive full-shot decoding strategy for more efficient and precise generation. Our model outperforms baselines by a large margin in terms of three generation task including distribution learning on 1D sequences and 3D structures, antigen-binding CDR-H3 design, and affinity optimization. Our work presents insights for modeling antibody-antigen interactions in further research.

\subsubsection*{Acknowledgments}
This work is jointly supported by the Vanke Special Fund for Public Health and Health Discipline Development of Tsinghua University, the National Natural Science Foundation of China (No. 61925601, No. 62006137), Guoqiang Research Institute General Project of Tsinghua University (No. 2021GQG1012), Beijing Academy of Artificial Intelligence, Beijing Outstanding Young Scientist Program (No. BJJWZYJH012019100020098).

\section*{Reproducibility}
The codes for our MEAN are available at \url{https://github.com/THUNLP-MT/MEAN}.

\bibliography{iclr2023_conference}
\bibliographystyle{iclr2023_conference}

\clearpage

\appendix
\section{Details of Sequence and Structure Modeling}
    \paragraph{10-fold dataset splits} We provide the number of clusters and antibodies in each fold of our 10-fold cross validation. When selecting fold $i$ for testing, we use fold $i - 1$ for validation (for fold 1 as the test set, we use fold 10 for validation) and the union of other folds for training.
    \label{app:sabdab_split}
    \begin{table}[htbp]
    \vskip -0.1in
    \caption{Statistics of each fold of 10-fold cross validation.}
    \centering
    \begin{tabular}{ccccccccc}
    \toprule
    \multirow{2}{*}{} & \multicolumn{2}{c}{CDR-H1} &  & \multicolumn{2}{c}{CDR-H2} &  & \multicolumn{2}{c}{CDR-H3} \\ \cline{2-3} \cline{5-6} \cline{8-9} 
                      & cluster     & antibody     &  & cluster     & antibody     &  & cluster     & antibody     \\ \midrule
    fold 1            & 77          & 299          &  & 110         & 237          &  & 166         & 317          \\
    fold 2            & 77          & 260          &  & 110         & 343          &  & 166         & 301          \\
    fold 3            & 77          & 363          &  & 110         & 288          &  & 166         & 348          \\
    fold 4            & 77          & 286          &  & 109         & 315          &  & 166         & 271          \\
    fold 5            & 77          & 195          &  & 109         & 419          &  & 166         & 279          \\
    fold 6            & 76          & 241          &  & 109         & 217          &  & 166         & 345          \\
    fold 7            & 76          & 427          &  & 109         & 266          &  & 166         & 306          \\
    fold 8            & 76          & 326          &  & 109         & 321          &  & 166         & 320          \\
    fold 9            & 76          & 210          &  & 109         & 374          &  & 166         & 309          \\
    fold 10           & 76          & 520          &  & 109         & 347          &  & 165         & 331          \\
    Total             & 765         & 3,127        &  & 1,093       & 3,127        &  & 1659        & 3,127 \\ \bottomrule   
    \end{tabular}
    \vskip -0.2in
    \end{table}
    
    \paragraph{RefineGNN paper setting} In Table~\ref{tab:seq_3d}, we also compare LSTM, RefineGNN, and MEAN on the same split used in the paper of RefineGNN~\citep{jin2021iterative} and the models are denoted as LSTM$^\ast$, RefineGNN$^\ast$, and MEAN$^\ast$. Here we provide the sizes of training/validation/test set as well as their subsets of complete complexes in Table~\ref{tab:paper_split}. We use the subsets to train MEAN$^\ast$, which is only about 52\% of the full sets, and all three models are evaluated on the subset of the test set for fair comparison. For RefineGNN$^\ast$, we directly use the official checkpoints provided by~\citet{jin2021iterative} for evaluation.
    
    \begin{table}[htbp]
    \vskip -0.1in
    \caption{Number of antibodies in the training/validation/test set of~\citet{jin2021iterative} as well as their subsets of complete complexes.}
    \label{tab:paper_split}
    \centering
    \begin{tabular}{ccccccccc}
    \toprule
    \multirow{2}{*}{} & \multicolumn{2}{c}{CDR-H1} &  & \multicolumn{2}{c}{CDR-H2} &  & \multicolumn{2}{c}{CDR-H3} \\ \cline{2-3} \cline{5-6} \cline{8-9} 
                      & full        & subset       &  & full        & subset       &  & full        & subset       \\ \midrule
    training          & 4,050       & 2,131        &  & 3,876       & 2,035        &  & 3,896       & 2,051        \\
    validation        & 359         & 170          &  & 483         & 241          &  & 403         & 227          \\
    test              & 326         & 184          &  & 376         & 209          &  & 437         & 207          \\ \hline
    Total             & 4,735       & 2,485        &  & 4,735       & 2,485        &  & 4,735       & 3,127        \\ \bottomrule
    \end{tabular}
    \vskip -0.2in
    \end{table}
    
\section{Iterative Target Augmentation Algorithm for Affinity Optimization}
    \label{app:ita_continuous}
    Notably, the original algorithm is designed for discrete properties, while in \textsection~\ref{sec:opt} the affinity is continuous. Therefore, we adapt ITA for compatibility with our affinity scorer as follows. The core is that we maintain a list of high-quality candidates for each antibody to be optimized during the ITA process. In each iteration, we produce $C$ candidates for each antibody and sort them together with the candidates in the high-quality list according to the scores.
    Then we retain the top-$k$ candidates while dropping all others. The above process goes through all the candidates in the current list before entering the next iteration. It is expected that the distribution of the generated antibodies will move towards the higher-affinity landscape. In particular, we run 20 iterations for each pretrained generative model by setting $C=50$ and $k=4$.

\section{The Predictor Used in Affinity Optimization}
\label{app:predictor_affinity}
According to \citet{shan2022deep}, the output of the $\Delta \Delta G$ predictor has been calibrated under the unit of kcal/mol, and the correlation between the predicted $\Delta \Delta G$ and the experimental value is 0.65 on the test set. It means if the predicted $\Delta \Delta G$ is $x$, then the binding affinity is decreased by $x$ kcal/mol under a correlation of 0.65.

\section{Experiment Details and Hyperparameters}  
    \label{app:hparam}
    For all models incorporating antigen, we select 48 residues closest to the antibody in terms of the alpha carbon distance as the antigen information. We conduct experiments on a machine with 56 CPU cores and 10 GeForce RTX 2080 Ti GPUs. Models using iterative refinement decoding have intensive GPU memory requirements and are therefore trained with the data-parallel framework of Pytorch on 6 GPUs. Other models only need 1 GPU to train. We use Adam optimizer with $lr=0.001$ and decay the learning rate by 0.95 every epoch. The batch size is set to be 16. All models are trained for 20 epochs and the checkpoint with the lowest loss on the validation set is selected for testing. The training strategy is consistent across different experiments in the paper, and the learning rate of ITA finetuning is also set to $0.001$. Furthermore, we provide the hyperparameters for baselines and our MEAN in Table~\ref{tab:hparams}. For the RosettaAD~\citep{alford2017rosetta} used in \textsection~\ref{sec:design}, we adopt the denovo design protocol initializing with random CDRs presented in its manual\footnote{\url{https://www.rosettacommons.org/docs/latest/application_documentation/antibody/RosettaAntibodyDesign}}.
    \begin{table}[htbp]
    \caption{Hyperparameters for each model.}
    \label{tab:hparams}
    \centering
    \begin{tabular}{ccl}
    \toprule
    \multicolumn{1}{l}{hyperparameter} & \multicolumn{1}{l}{value} & description                                                       \\ \midrule\hline
    \multicolumn{3}{c}{shared}                                                                                                         \\ \hline
    vocab size                         & 25                        & There are 21 categories of amino acids in eukaryote, plus the     \\
                                       &                           & 4 special tokens for heavy chain, light chain, antigen and mask.  \\
    dropout                            & 0.1                       & Dropout rate.                                                     \\ \hline\hline
    \multicolumn{3}{c}{(C-)LSTM}                                                                                                       \\ \hline
    hidden\_size                       & 256                       & Size of the hidden states in LSTM.                                \\
    n\_layers                          & 2                         & Number of layers of LSTM.                                         \\ \hline\hline
    \multicolumn{3}{c}{(C-)RefineGNN}                                                                                                  \\ \hline
    hidden\_size                       & 256                       & Size of the hidden states in the message passing network (MPN).   \\
    k\_neighbors                       & 9                         & Number of neighbors to include in k-nearest neighbor (KNN) graph. \\
    block\_size                        & 4                         & Number of residues in a block.                                    \\
    n\_layers                          & 4                         & Number of layers in MPN.                                          \\
    num\_rbf                           & 16                        & Number of RBF kernels to consider.                                \\ \hline\hline
    \multicolumn{3}{c}{MEAN}                                                                                                           \\ \hline
    embed\_size                        & 64                        & Size of trainable embeddings for each type of amino acids.        \\
    hidden\_size                       & 128                       & Size of hidden states in MEAN.                                    \\
    n\_layers                          & 3                         & Number of layers in MEAN.                                         \\
    alpha                              & 0.8                       & The weight to balance sequence loss and structure loss.           \\
    n\_iter                            & 3                         & Number of iterations for progressive full-shot decoding.          \\ \bottomrule
    \end{tabular}
    \vskip -0.2in
    \end{table}

    For training our MEAN on the split in RefineGNN paper setting, we set $alpha=0.6$, batch size to be 8, and total training epochs to be 30.

\setcounter{theorem}{0}
\section{Proof of Theorem 1}
    \label{app:proof}
    
    Our MEAN satisfies E(3)-equivariance as demonstrated in Theorem 1:
    \begin{theorem}
        \label{appthe:equi}
        We denote the translation process by MEAN as $\{(p_i, \tilde{\mZ}_i)\}_{i\in\gV_C}=f(\{h_i^{(0)},\mZ_i^{(0)}\}_{i\in\gV})$, then $f$ is $E(3)$-equivariant. In other words, for each transformation $g\in E(3)$, we have $f(\{h_i^{(0)},g\cdot\mZ_i^{(0)}\}_{i\in\gV})=g\cdot f(\{h_i^{(0)},\mZ_i^{(0)}\}_{i\in\gV})$, where the group action $\cdot$ is instantiated as $g\cdot\mZ\coloneqq\mO\mZ$ for orthogonal transformation $\mO\in\R^{3\times3}$ and  $g\cdot\mZ\coloneqq\mZ+\vt$ for translation transformation $\vt\in\R^3$.
    \end{theorem}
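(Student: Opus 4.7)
The plan is to verify $E(3)$-equivariance module-by-module and then compose. The key observation driving the whole argument is that the scalarization term $\frac{(\mZ_{ij})^\top \mZ_{ij}}{\|(\mZ_{ij})^\top \mZ_{ij}\|_F}$ used inside $\phi_m$, $\phi_k$, $\phi_v$ is $E(3)$-invariant. Indeed, $\mZ_{ij}=\mZ_i-\mZ_j$ is a difference of coordinates, so any translation $\vt$ cancels; and for orthogonal $\mO$, $(\mO\mZ_{ij})^\top(\mO\mZ_{ij})=\mZ_{ij}^\top\mO^\top\mO\mZ_{ij}=\mZ_{ij}^\top\mZ_{ij}$ because $\mO^\top\mO=\mI$. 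The Frobenius norm in the denominator is likewise invariant, so the whole normalized Gram matrix passes through any $g\in E(3)$ unchanged.

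Given that, I would show by induction on the layer index $l$ that the hidden features $h_i^{(l)}$ are $E(3)$-invariant, while the coordinates $\mZ_i^{(l)}$ are $E(3)$-equivariant (i.e.\ transform as $g\cdot\mZ_i^{(l)}$). The inductive step for the internal encoder is immediate: since $h_i^{(l)}, h_j^{(l)}, e_{ij}$ and the scalarization are all invariant, the message $m_{ij}$ is invariant; hence $h_i^{(l+0.5)}$ from Eq.~(\ref{eq:internal-2}) is invariant. The coordinate update Eq.~(\ref{eq:internal-3}) reads $\mZ_i^{(l+0.5)}=\mZ_i^{(l)}+\frac{1}{|\gN(i|\gE_{\text{in}})|}\sum_j \mZ_{ij}^{(l)}\phi_Z(m_{ij})$, where $\phi_Z(m_{ij})$ is an invariant scalar/vector coefficient. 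Under a translation, the extra $\vt$ appears only in $\mZ_i^{(l)}$ because $\mZ_{ij}^{(l)}$ is translation-invariant; under an orthogonal transformation, both $\mZ_i^{(l)}$ and each $\mZ_{ij}^{(l)}$ are left-multiplied by $\mO$, which factors out of the sum. Hence $\mZ_i^{(l+0.5)}$ transforms as $g\cdot\mZ_i^{(l+0.5)}$, preserving the inductive hypothesis. Exactly the same bookkeeping works for the external attentive encoder, since $q_i$, $k_{ij}$, $v_{ij}$ and therefore $\alpha_{ij}$ are invariant, while the coordinate update again has the form $\mZ_i^{(l+0.5)} + \sum_j \mZ_{ij}^{(l+0.5)}(\text{invariant})$.

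After $L$ alternations, the output module applies Eqs.~(\ref{eq:internal-1}-\ref{eq:internal-3}) one more time, so the same inductive step delivers invariant $\tilde h_i$ and equivariant $\tilde{\mZ}_i$. The predicted distribution $p_i=\mathrm{Softmax}(\tilde h_i)$ inherits invariance from $\tilde h_i$, which for the class prediction channel is precisely the right transformation law (a probability vector is a scalar-type quantity). Combining these gives $f(\{h_i^{(0)},g\cdot\mZ_i^{(0)}\}_{i\in\gV})=g\cdot f(\{h_i^{(0)},\mZ_i^{(0)}\}_{i\in\gV})$ as claimed.

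The main obstacle I expect is careful bookkeeping of the multichannel representation: the coordinates live in $\R^{3\times m}$ rather than $\R^3$, so the group action $\mO\mZ$ acts only on the spatial axis of dimension 3, and the scalarization produces an $m\times m$ Gram matrix. I would spell out once that orthogonal invariance of this $m\times m$ block and translation invariance of each column difference $\mZ_{i,:k}-\mZ_{j,:k}$ both hold, so that the argument above applies channel-wise and the subsequent contractions $\mZ_{ij}\phi_Z(m_{ij})$ and $\mZ_{ij}^{(l+0.5)} v_{ij}$ (which contract along the $m$-axis) leave the spatial axis intact and equivariant. Beyond this, the proof is a mechanical induction and no serious obstruction should arise.
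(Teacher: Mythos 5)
Your proposal is correct and follows essentially the same route as the paper's proof: invariance of the normalized Gram-matrix scalarization (the paper's Lemma~\ref{lemma:l1}), then per-layer equivariance of the internal and external encoders with invariant features and equivariant coordinates (Lemmas~\ref{lemma:l2} and~\ref{lemma:l3}), composed through the output module and softmax. Your explicit remark on the multichannel $\R^{3\times m}$ bookkeeping is a nice touch but does not change the argument.
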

    
    In the following, we denote $f'(\mZ) = \mZ^T\mZ / |\mZ^T \mZ|_F$ and the orthogonal group as $O(3) = \{\mO \in \R^{3\times 3}| \mO^T\mO = \mO \mO^T = \mI_d\}$. Prior to the proof, we first present the necessary lemmas below:
    
    \setcounter{theorem}{0}
    \begin{lemma}
        \label{lemma:l1}
        The function $f': \R^{3\times m}\rightarrow \R^{m\times m}$ is invariant on O(3). Namely, $\forall g\in O(3)$, we have $f'(g\cdot \mZ) = f'(\mZ)$, where $g\cdot \mZ \coloneqq \mO \mZ$.
    \end{lemma}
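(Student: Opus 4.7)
The plan is to verify invariance by direct substitution into the definition of $f'$ and then exploit the defining property $\mO^T\mO = \mI_3$ of orthogonal matrices. Let $g \in O(3)$ act as $g \cdot \mZ = \mO\mZ$ with $\mO \in \R^{3\times 3}$ satisfying $\mO^T\mO = \mI_3$.

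First I would compute the numerator of $f'(g\cdot\mZ)$: we have $(\mO\mZ)^\top(\mO\mZ) = \mZ^\top \mO^\top \mO \mZ = \mZ^\top \mI_3 \mZ = \mZ^\top\mZ$. This is the single non-trivial identity in the proof, and it is an immediate consequence of orthogonality. Second, since the numerator matrix coincides with $\mZ^\top\mZ$ as an element of $\R^{m\times m}$, its Frobenius norm also equals $\|\mZ^\top\mZ\|_F$ trivially (no separate norm-invariance argument is even needed once the matrices are shown equal). Dividing these, $f'(g\cdot\mZ) = \mZ^\top\mZ / \|\mZ^\top\mZ\|_F = f'(\mZ)$, which is precisely the claimed invariance.

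There is essentially no hard step here: the entire argument rests on the single orthogonality identity $\mO^\top\mO = \mI_3$, so the ``main obstacle'' is merely making explicit that multichannel inputs $\mZ \in \R^{3\times m}$ are rotated from the left (acting on the spatial $3$-dimensional axis), which is consistent with the convention fixed in Theorem~\ref{appthe:equi}. Once this convention is spelled out, the calculation is a one-line substitution, and the lemma can be used as a black box in the subsequent proof of Theorem~\ref{appthe:equi} to establish that the scalarized messages $m_{ij}$, queries, keys, and values in MEAN are $E(3)$-invariant.
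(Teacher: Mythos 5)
Your proof is correct and follows exactly the paper's argument: substitute $g\cdot\mZ = \mO\mZ$, use $\mO^\top\mO = \mI_3$ to reduce the numerator to $\mZ^\top\mZ$, and observe the denominator is then automatically unchanged. Nothing is missing.
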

    \begin{proof}
        $\forall g \in O(3)$, we have $g\cdot \mZ \coloneqq \mO\mZ$. Therefore, we can derive the following equations:
        \begin{align}
            f'(g\cdot\mZ) &= \frac{(\mO\mZ)^T(\mO\mZ)}{|(\mO\mZ)^T(\mO\mZ)|_F} = \frac{\mZ^T\mO^T\mO\mZ}{|\mZ^T\mO^T\mO\mZ|_F} \\
                          &= \frac{\mZ^T\mZ}{|\mZ^T\mZ|_F} = f(\mZ),
        \end{align}
        which conclude the invariance of $f'$ on $O(3)$.
    \end{proof}
    
    \begin{lemma}
        \label{lemma:l2}
        We denote the internal context encoder of layer $l$ as $\{(h_i^{(l+0.5)}, \mZ_i^{(l+0.5)})\}_{i\in\gV} = \sigma_{\text{in}}(\{(h_i^{(l)}, \mZ_i^{(l)})\}_{i\in\gV})$, then $\sigma_{\text{in}}$ is $E(3)$-equivariant.
    \end{lemma}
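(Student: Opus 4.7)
The plan is to decompose any $g\in E(3)$ into a composition of an orthogonal transformation $\mO\in O(3)$ and a translation $\vt\in\R^3$, and then verify equivariance for each piece separately by tracking how each quantity that appears in equations (\ref{eq:internal-1})--(\ref{eq:internal-3}) transforms. The key observation is that the updates are expressed entirely in terms of (i) the node features $h_i^{(l)}$, (ii) the relative coordinates $\mZ_{ij}^{(l)}=\mZ_i^{(l)}-\mZ_j^{(l)}$, and (iii) the scalarized quantity $f'(\mZ_{ij}^{(l)})$ fed into $\phi_m$. These are precisely the ingredients whose transformation behaviour under $O(3)$ and $T(3)$ is well controlled.

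First I would verify that the message $m_{ij}$ is $E(3)$-invariant. Under a translation $\vt$, the differences $\mZ_{ij}^{(l)}$ are unchanged, so $f'(\mZ_{ij}^{(l)})$, and hence the argument of $\phi_m$, are unchanged. Under an orthogonal transformation $\mO$, the differences pick up a factor of $\mO$, and Lemma~\ref{lemma:l1} immediately gives $f'(\mO\mZ_{ij}^{(l)})=f'(\mZ_{ij}^{(l)})$. Since $h_i^{(l)}$, $h_j^{(l)}$ and $e_{ij}$ are by assumption invariant scalars/features, $m_{ij}$ is invariant under both actions, hence under any $g\in E(3)$. Invariance of $h_i^{(l+0.5)}=\phi_h(h_i^{(l)},\sum_{j}m_{ij})$ then follows at once, since all its inputs are invariant.

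Next I would check the coordinate update. Under translation, $\mZ_{ij}^{(l)}$ and $\phi_Z(m_{ij})$ are invariant, while $\mZ_i^{(l)}$ shifts by $\vt$; summing gives $\mZ_i^{(l+0.5)}+\vt$, which is the correct translated output. Under an orthogonal transformation, the sum becomes
\begin{equation*}
\mO\mZ_i^{(l)}+\frac{1}{|\gN(i|\gE_{\text{in}})|}\sum_{j\in\gN(i|\gE_{\text{in}})}\bigl(\mO\mZ_{ij}^{(l)}\bigr)\phi_Z(m_{ij})=\mO\Bigl(\mZ_i^{(l)}+\frac{1}{|\gN(i|\gE_{\text{in}})|}\sum_{j}\mZ_{ij}^{(l)}\phi_Z(m_{ij})\Bigr)=\mO\mZ_i^{(l+0.5)},
\end{equation*}
using linearity of left multiplication by $\mO$ and the invariance of $m_{ij}$ established above. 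Combining the two cases and composing gives equivariance under the full group $E(3)=O(3)\ltimes T(3)$.

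No step is really the hard part; the main thing to get right is bookkeeping. The only subtlety worth flagging is that the argument passed through $\phi_Z$ must be the invariant scalar $m_{ij}$ (not some raw coordinate), so that the coefficients multiplying $\mZ_{ij}^{(l)}$ in the coordinate update remain unchanged by the group action; this is exactly why the scalarization step in (\ref{eq:internal-1}) is essential. Once that is recorded, the proof is a direct substitution and the lemma follows.
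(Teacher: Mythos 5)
Your proof is correct and follows essentially the same route as the paper's: both rely on Lemma~\ref{lemma:l1} to get invariance of $m_{ij}$ (and hence of $h_i^{(l+0.5)}$), and then pull $\mO$ out of the coordinate update by linearity while the translation only shifts $\mZ_i^{(l)}$. The only cosmetic difference is that you verify the $O(3)$ and translation parts separately and compose, whereas the paper substitutes $g\cdot\mZ_i^{(l)}=\mO\mZ_i^{(l)}+\vt$ and checks both at once — the underlying computation is identical.
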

    \begin{proof}
        $\forall g\in E(3)$, we have $g\cdot \mZ^{(l)}_i \coloneqq \mO\mZ^{(l)}_i + \vt$ where $\mO\in O(3)$ and $\vt \in \R^3$. Therefore we have the relative coordinates after transformation as $g\cdot \mZ^{(l)}_{i} - g\cdot \mZ^{(l)}_{j} = \mO\mZ^{(l)}_{ij}$. According to Lemma~\ref{lemma:l1} , during the propagations of our internal context encoder, we have $E(3)$-invariant messages:
        
        \begin{align}
          m_{ij} & = \phi_m(h_i^{(l)}, h_j^{(l)}, f'(\mO\mZ^{(l)}_{ij}),e_{ij}) = \phi_m(h_i^{(l)}, h_j^{(l)}, f'(\mZ^{(l)}_{ij}),e_{ij}),
        \end{align}
        
        Then it is easy to derive that the hidden states are $E(3)$-invariant and the coordinates are $E(3)$-equivariant as follows:
        \begin{align}
          h_i^{(l+0.5)} & = \phi_h(h_i^{(l)}, \sum\nolimits_{j\in\gN(i|\gE_{\text{in}})} m_{ij}), \\
          \mO\mZ_i^{(l+0.5)} + \vt & = \mO[\mZ_i^{(l)} + \frac{1}{|\mathcal{N}(i|\gE_{\text{in}})|}\sum\nolimits_{j\in \mathcal{N}(i|\gE_{\text{in}})}\mZ_{ij}^{(l)}\phi_Z(m_{ij})] + \vt \\
        & = (\mO\mZ_i^{(l)} + \vt) + \frac{1}{|\mathcal{N}(i|\gE_{\text{in}})|}\sum\nolimits_{j\in \mathcal{N}(i|\gE_{\text{in}})}(\mO\mZ_{ij}^{(l)})\phi_Z(m_{ij}),
        \end{align}
        Therefore we have $g\cdot \sigma_{\text{in}}(\{(h_i^{(l)}, \mZ_i^{(l)})\}_{i\in\gV}) = \sigma_{\text{in}}(\{(h_i^{(l)}, g\cdot \mZ_i^{(l)})\}_{i\in\gV})$.
    \end{proof}
    
    \begin{lemma}
        \label{lemma:l3}
        We denote the external attentive encoder as $\{(h_i^{(l+1)}, \mZ_i^{(l+1)})\}_{i\in\gV} = \sigma_{\text{ex}}(\{(h_i^{(l+0.5)}, \mZ_i^{(l+0.5)})\}_{i\in\gV})$, then $\sigma_{\text{ex}}$ is $E(3)$-equivariant.
    \end{lemma}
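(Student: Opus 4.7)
The plan is to mirror the structure used in Lemma~\ref{lemma:l2} and exploit the scalarization identity of Lemma~\ref{lemma:l1}, so that the only genuinely geometric step is the coordinate update. Fix any $g\in E(3)$, realized by $g\cdot\mZ\coloneqq\mO\mZ+\vt$ with $\mO\in O(3)$ and $\vt\in\R^3$. Since relative coordinates satisfy $g\cdot\mZ_i^{(l+0.5)}-g\cdot\mZ_j^{(l+0.5)}=\mO\mZ_{ij}^{(l+0.5)}$, the translation $\vt$ already cancels at the level of $\mZ_{ij}^{(l+0.5)}$, leaving only the orthogonal action to track inside the attention operations.

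First I would show that the triple $(q_i,k_{ij},v_{ij})$ is $E(3)$-invariant: $q_i=\phi_q(h_i^{(l+0.5)})$ depends only on hidden features, hence is trivially invariant; $k_{ij}$ and $v_{ij}$ take the coordinate input exclusively through $f'(\mZ_{ij}^{(l+0.5)})$, which by Lemma~\ref{lemma:l1} is unchanged under $\mZ_{ij}^{(l+0.5)}\mapsto\mO\mZ_{ij}^{(l+0.5)}$. Combined with the invariance of the hidden features this gives invariance of $k_{ij}$ and $v_{ij}$. Invariance of $\alpha_{ij}$ then follows since it is a softmax over inner products of invariants, and invariance of $h_i^{(l+1)}$ follows since it is a sum of scalar-weighted invariants added to the invariant $h_i^{(l+0.5)}$.

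The only remaining step is equivariance of the coordinate update. Applying $g$ termwise yields
\begin{align}
\mZ_i^{(l+1)}\;\longmapsto\;\bigl(\mO\mZ_i^{(l+0.5)}+\vt\bigr)+\sum_{j\in\gN(i|\gE_{\text{ex}})}\alpha_{ij}\,\bigl(\mO\mZ_{ij}^{(l+0.5)}\bigr)\,v_{ij},
\end{align}
and factoring $\mO$ out of the sum recovers $\mO\,\mZ_i^{(l+1)}+\vt=g\cdot\mZ_i^{(l+1)}$. Combined with the invariance of the hidden features this delivers $\sigma_{\text{ex}}(\{(h_i^{(l+0.5)},g\cdot\mZ_i^{(l+0.5)})\}_{i\in\gV})=g\cdot\sigma_{\text{ex}}(\{(h_i^{(l+0.5)},\mZ_i^{(l+0.5)})\}_{i\in\gV})$.

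I do not expect any genuine obstacle here. The only item to watch is bookkeeping the translation vector $\vt$: it must appear exactly once, attached to the baseline coordinate $\mO\mZ_i^{(l+0.5)}$, and must be absent from every relative-coordinate summand because those differences annihilate $\vt$. Everything else reduces to Lemma~\ref{lemma:l1} and linearity of the aggregation, exactly as in Lemma~\ref{lemma:l2}.
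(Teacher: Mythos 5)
Your proposal is correct and follows essentially the same route as the paper's proof: invariance of $q_i$, $k_{ij}$, $v_{ij}$ (hence of $\alpha_{ij}$ and $h_i^{(l+1)}$) via Lemma~\ref{lemma:l1} and cancellation of $\vt$ in relative coordinates, then equivariance of the coordinate update by factoring out $\mO$ and keeping $\vt$ attached to the base coordinate. No gaps to report.
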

    \begin{proof}
        Similar to the proof of Lemma~\ref{lemma:l2}, we first derive that the query, key and value vectors are $E(3)$-invariant:
        
        \begin{align}
            q_i &= \phi_q(h_i^{(l+0.5)}), \\
            k_{ij} &= \phi_k(f'(\mO \mZ_{ij}^{(l+0.5)}), h_j^{(l+0.5)}) = \phi_k(f'(\mZ_{ij}^{(l+0.5)}), h_j^{(l+0.5)}), \\
            v_{ij} &= \phi_v(f'(\mO \mZ_{ij}^{(l+0.5)}), h_j^{(l+0.5)}) = \phi_v(f'(\mZ_{ij}^{(l+0.5)}), h_j^{(l+0.5)}),
        \end{align}
        which directly lead to the $E(3)$-invariance of attention weights $\alpha_{ij} = \frac{\exp(q_i^\top k_{ij})}{\sum\nolimits_{j\in\gN(i|\gE_{\text{ex}})}\exp(q_i^\top k_{ij})}$.
        
        Again it is easy to derive that the hidden states are $E(3)$-invariant and the coordinates are $E(3)$-equivariant as follows:
        \begin{align}
          h_i^{(l+1)} &= h_i^{(l+0.5)} + \sum\nolimits_{j\in\mathcal{N}(i|\gE_{\text{ex}})}\alpha_{ij} v_{ij}, \\
          \mO\mZ_i^{(l+1)} + \vt &= \mO (\mZ_i^{(l+0.5)}  + \sum\nolimits_{j\in \gN(i|\gE_{\text{ex}})}\alpha_{ij}\mZ_{ij}^{(l+0.5)}v_{ij}) + \vt \\
          &= (\mO\mZ_i^{(l+0.5)} + \vt) + \sum\nolimits_{j\in \gN(i|\gE_{\text{ex}})}\alpha_{ij} \mO\mZ_{ij}^{(l+0.5)}v_{ij}.
        \end{align}
        Therefore we have $g\cdot \sigma_{\text{ex}}(\{(h_i^{(l+0.5)}, \mZ_i^{(l+0.5)})\}_{i\in\gV}) = \sigma_{\text{ex}}(\{(h_i^{(l+0.5)}, g\cdot \mZ_i^{(l+0.5)})\}_{i\in\gV})$.
    \end{proof}
    With the above lemmas, we are ready to present the full proof of Theorem 1 as follows:
    \begin{proof}
        For each layer in MEAN, according to Lemma~\ref{lemma:l2} and~\ref{lemma:l3}, we have:
        \begin{align}
            \{(h^{(l+1)}_i, g\cdot \mZ^{(l+1))}_i\}_{i\in \gV} = \sigma_{\text{ex}}(\{(h_i^{(l+0.5)}, g\cdot\mZ_i^{(l+0.5)})\}_{i\in\gV}) = \sigma_{\text{ex}}(\sigma_{\text{in}}(\{(h_i^{(l)}, g\cdot\mZ_i^{(l)})\}_{i\in\gV})),
        \end{align}
        Since $p_i$ is obtained by applying softmax on the hidden representations from the output module, which shares the same formula as $\sigma_{\text{in}}$, and $\tilde{\mZ}_i$ is the same as the coordination from the output module, it is easy to derive:
        \begin{align}
            \{(p_i, g\cdot\tilde{\mZ}_i)\}_{i\in\gV} = f(\{h_i^{(0)},g\cdot\mZ_i^{(0)}\}_{i\in\gV}), \\
            g\cdot f(\{h_i^{(0)},\mZ_i^{(0)}\}_{i\in\gV}) = \{(p_i, g\cdot\tilde{\mZ}_i)\}_{i\in\gV},
        \end{align}
        Therefore we have:
        \begin{align}
            f(\{h_i^{(0)},g\cdot\mZ_i^{(0)}\}_{i\in\gV}) = g\cdot f(\{h_i^{(0)},\mZ_i^{(0)}\}_{i\in\gV}),
        \end{align}
        which concludes Theorem~\ref{appthe:equi}.
    \end{proof}

\section{Huber Loss}
    \label{app:huber}
    We use Huber loss~\citep{huber1992robust} for the modeling of coordinations, which is defined as follows:
    \begin{align}
        l(x, y) = \left\{ \begin{array}{lr}
        0.5 (x-y)^2, if |x-y| < \delta, \\
        \delta \cdot (|x-y| - 0.5 \cdot \delta), else
    \end{array} \right.
    \end{align}
    It reads that if the L1 norm of $|x-y|$ is smaller than $\delta$, it is MSE loss, otherwise it is L1 loss. At the beginning of the training, the deviation of the predicted structure and ground truth is large and the L1 term makes the loss less sensitive to outliers than MSE loss. When the training is almost done, the deviation is small and the MSE loss provides smoothness near 0. In practice, we find that directly using MSE loss occasionally causes NaN at the beginning of the training while Huber loss leads to a more stable training procedure. We set $\delta = 1$ in our experiments.
    
\section{Complexity Analysis}
    \label{app:complexity}
    We provide the complexity analysis here. Suppose the numbers of nodes in the antigen, the light chain, and the heavy chain are $N_A$, $N_L$, and $N_H$, respectively. The message passing in Eq. (1-6) for each node involves $K$ neighbors at maximum, and it is performed over $L$ rounds at each iteration. Then, the complexity of our algorithm over $T$ iterations becomes $O(2LKT(N_A + N_L + N_H))$, where the number 2 refers to the joint computation of the internal and external encoders. Similarly, the complexity of RefineGNN with only heavy chain is $O(LKT'N_H)$, where $T'$ denotes the number of iterations, namely, the length of the CDR region. We would like to specify these two points:
    \begin{enumerate}
        \item Even if our algorithm contains more nodes than RefineGNN ($(N_A + N_L + N_H)$ vs $N_H$), the extra computation overhead does not remarkably count, since the message passing for each node can be parallelly computed in current deep learning platforms (e.g. Pytorch).
        \item We apply the full-shot decoding other than the autoregressive mechanism used in RefineGNN, hence $T$ ($T=3$ in our experiments) is much smaller than $T'$ (usually larger than 10), implying more efficiency of our method.
    \end{enumerate}

\section{Modeling with Randomness}
\label{app:rand_mean}
The current model is deterministic given the same inputs, but in some scenarios, the diversity of samples is required. To tackle this, we inject randomness into our model by adding standard Gaussian noises to the initialized coordinates. We denote this model as rand-MEAN and evaluate it on the tasks of sequence-structure modeling and antigen-binding CDR-H3 design. The results in Table~\ref{tab:seq_3d_rand} and Table~\ref{tab:rabd_rand} suggest injecting randomness into MEAN has acceptable impacts on the performance.
  \begin{table}[htbp]
  \vskip -0.1in
  \caption{10-fold cross validation mean (standard deviation) for 1D sequence and 3D structure modeling on SAbDab (\textsection \ref{sec:dist}) of rand-MEAN.}
  \label{tab:seq_3d_rand}
  \begin{center}
  \scalebox{0.9}{
  \begin{tabular}{ccccccc}
  \toprule
  \multirow{2}{*}{Model} & \multicolumn{2}{c}{CDR-H1}                      & \multicolumn{2}{c}{CDR-H2}                      & \multicolumn{2}{c}{CDR-H3}                      \\ \cline{2-7} 
                         & AAR                    & RMSD                   & AAR                    & RMSD                   & AAR                    & RMSD                   \\ \midrule
  MEAN                   & \textbf{58.29$\pm$7.27\%} & \textbf{0.98$\pm$0.16} & \textbf{47.15$\pm$3.09\%} & \textbf{0.95$\pm$0.05} & \textbf{36.38$\pm$3.08\%} & \textbf{2.21$\pm$0.16} \\
  rand-MEAN              &         56.50$\pm$6.44\%  & \textbf{0.98$\pm$0.12} &         44.01$\pm$2.72\%  &         1.18$\pm$0.20  &         35.68$\pm$2.29\%  &         2.36$\pm$0.19  \\ \bottomrule
  \end{tabular}}
  \end{center}
  \vskip -0.2in
  \end{table}
  
  \begin{table}[htbp]
  \vskip -0.1in
  \caption{Amino acid recovery (AAR), TM-score and RMSD for CDR-H3 design on RAbD benchmark of rand-MEAN.}
  \label{tab:rabd_rand}
  \centering
  \scalebox{0.9}{\begin{tabular}{cccc}
  \toprule
  Model       & AAR              & TM-score  & RMSD      \\ \midrule
  MEAN        &         36.77\%  & \textbf{0.9812} & \textbf{1.81} \\
  rand-MEAN   & \textbf{37.30\%} &         0.9794  & \textbf{1.81} \\ \bottomrule
  \end{tabular}}
  \end{table}

\section{Attention Visualization}
\label{app:att_vis}
In the external attentive encoder, we apply the attention mechanism to evaluate the weights between residues in different components. It will be interesting to visualize what patterns these attentions will discover. For this purpose, we extract the attention weights between the antibody and the antigen from the last layer, and check if they can reflect the binding energy calculated by Rosetta~\citep{alford2017rosetta}. In detail, for each residue in CDR-H3, we first identify the residue in the antigen that contributes the most to its binding energy. Then we calculate the rank of the identified residue according to the attention weights yielded by MEAN. 
We obtain the relative rank by normalizing it with the total number of antigen residues in the interface. If the attention weights are meaningful, then the resultant rank distribution will be bias towards small numbers; otherwise, they are distributed evenly between 0 and 1. Excitingly, Figure~\ref{fig:attention} (B) displays that we arrive at the former case, indicating the close correlation between our attention weights and the binding energy calculated by Rosetta. 
Figure~\ref{fig:attention} (A) also visualizes  an example of attention weights and the corresponding energy map, which shows that their distributions are similar.  

\begin{figure}[htbp]
    \vskip -0.1in
    \centering
    \includegraphics[width=\textwidth]{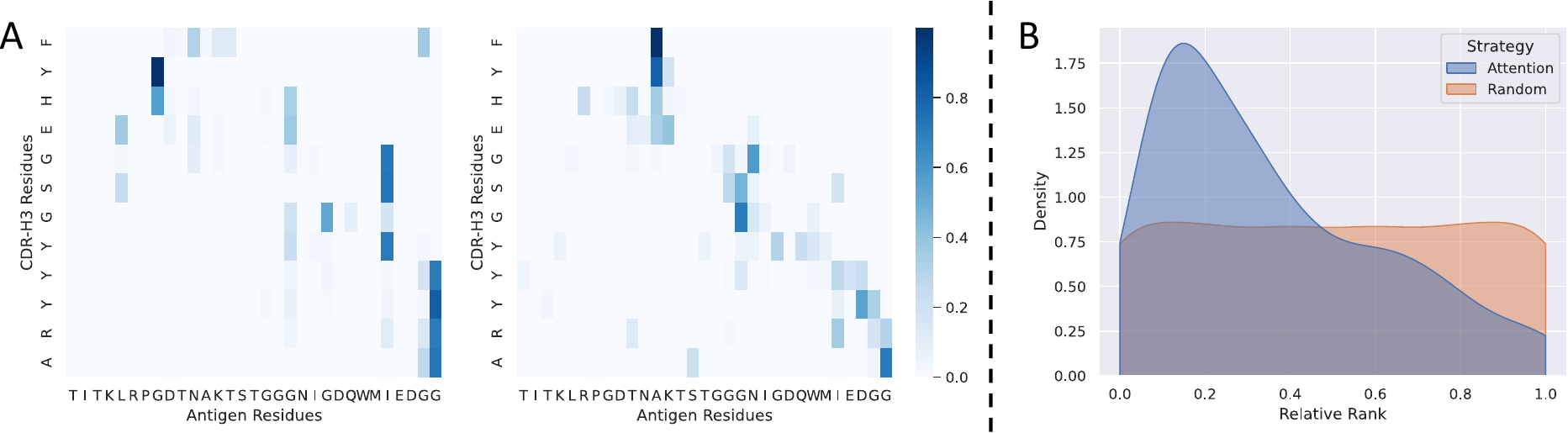}
    \vskip -0.1in
    \caption{(\textbf{A}) The left is the attention weights from the residues in CDR-H3 to those in antigen (PDB: 4ydk). The right is the relative energy contribution of each pair of residues calculated by Rosetta. (\textbf{B}) The density maps of the relative ranks for the most contributing residue pairs.}
    \label{fig:attention}
    \vskip -0.1in
\end{figure}

\section{How Graph Changes During Progressive Decoding}
\label{app:different_rounds}
We depict the variations of the density distribution of PPL and RMSD across different rounds in the progressive full-shot decoding in Figure~\ref{fig:progressive}. Between different rounds, the distribution of PPL remains similar, which is expected since we exert supervision with the ground truth in all the rounds. It is beyond expectation that even if we only supervise the predicted coordination of the last round, the distribution of RMSD shifts rapidly towards the optimal direction.
\begin{figure}[htbp]
    \centering
    \includegraphics[width=.8\textwidth]{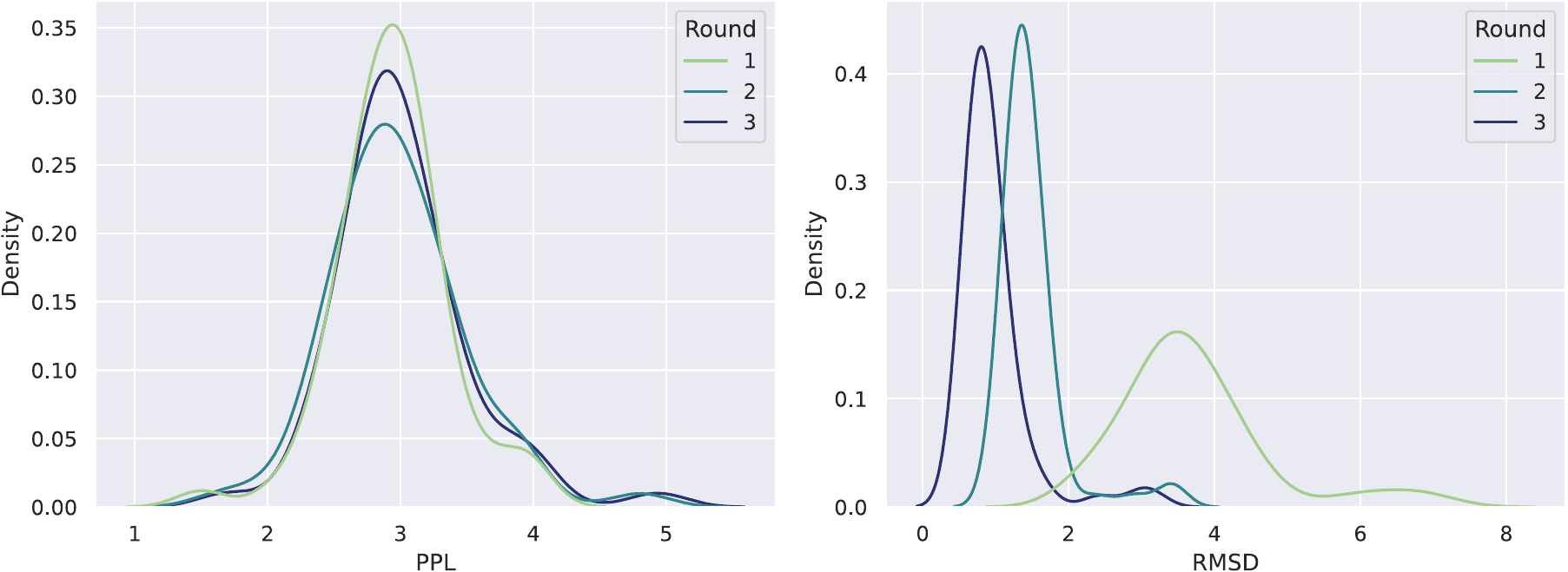}
    \vskip -0.1in
    \caption{The density of PPL and RMSD of different rounds in progressive full-shot decoding. The area under each curve integrates to 1.}
    \label{fig:progressive}
    \vskip -0.2in
\end{figure}

We additionally calculate the recovery rate of edges in the ground truth graph on the test set, in terms of internal edges within each component and external edges across different components. For the internal edges, the recovery rate is 89\% in the beginning and 95\% in the end;  for the external edges, the recovery rate is 11\% in the beginning and 84\% in the end. The results above indicate that the linear initialization is able to recover a large part of the internal edges but only a very small percentage of external edges. By our model, we can predict a major part of both internal and external edges, suggesting the validity of our design.

\section{Local Geometry}
Since both RMSD and TM-score reflect the correctness of global geometry, we further provide RMSD of bond lengths and angles to validate the local geometry for Section~\ref{sec:design}. The bond lengths are measured in angstroms and the angles consist of the three conventional dihedral angles of the backbone structure, namely $\phi, \psi, \omega$~\citep{ramachandran1963stereochemistry}. The RMSD of angles is implemented as the average of RMSD on their cosine values. The results shown in Table~\ref{tab:local_geom} indicate that our model still achieves much better performance regarding the local geometry. We find that RefineGNN achieves relatively low performance on modeling local geometry, which might be because that its indirect loss on various invariant features cannot ensure atoms in the backbone are equally supervised. For example, the precision of the coordinates of the carboxy carbon is relatively low with a high RMSD of 10.8, which results in the high error in bond lengths of local geometry.

\begin{table}[htbp]
\vskip -0.2in
\centering
\caption{RMSD of bond lengths and dihedral angles of the backbone structure.}
\label{tab:local_geom}
\begin{tabular}{ccc}
\toprule
Model       & Bond lengths & Angles($\phi, \psi, \omega$) \\ \midrule
RefineGNN   & 10.56        & 0.456                        \\
C-RefineGNN & 9.49         & 0.423                        \\
MEAN        & \textbf{0.37}& \textbf{0.302}                \\ \bottomrule
\end{tabular}
\vskip -0.1in
\end{table}

\section{Full Antigen or Only Epitope}
    In this paper, we use the 48 residues closest to the antibody to represent the antigen information. From the perspective of biology, these residues compose the epitope (i.e. the binding position of antibodies) which is usually located by a biological expert or detected via certain computational methods~\citep{haste2006prediction} beforehand, and they usually provide enough information for designing the antigen-binding antibodies. Theoretically, only the residues close to the antibody will affect the message passing between the antigen and the antibody because we construct edges based on a cutoff of $\text{C}_\alpha$ distance, therefore the performance should be similar regardless of using the full antigen or only the epitope. Practically, we also explore the influence of exclusion of other residues in the antigen on our model. Specifically, we incorporate full antigen in the experiment of \textsection~\ref{sec:design} and present the results in Table~\ref{tab:full_antigen}. As expected, incorporating full antigen results in similar performance to the epitope-only strategy with slight change in AAR and structure modeling.
    \label{app:full_antigen}
    \begin{table}[htbp]
    \vskip -0.2in
    \caption{Results of antigen-binding CDR-H3 design given different forms of antigen information.}
    \label{tab:full_antigen}
    \centering
    \begin{tabular}{cccc}
    \toprule
    Antigen       & AAR              & TM-score  & RMSD      \\ \midrule
    epitope-only  & 36.77\%          & \textbf{0.9812} & \textbf{1.81} \\
    full          & \textbf{37.92\%} & 0.9798          & 2.00          \\ \bottomrule
    \end{tabular}
    \vskip -0.1in
    \end{table}

\section{Limitations and Future Work}

\paragraph{Sidechain generation} In this paper, we follow the previous settings in \citet{jin2021iterative} and only model the backbone geometry for fair comparisons. However, sidechains also play an essential role in the interactions between antigens and antibodies. A potential method is to extend our method to incorporate sidechains by replacing the current node feature with the local full-atom graph of each residue. We leave this for future work.

\paragraph{Data augmentation} 
The size of existing data of antibodies is still small since it is hard to obtain the 3D structures of antibodies in practice. There are several possible ways to do data augmentation for further improvement. The first is to leverage pretrained models on sequences. Since the 1D sequence data of antibodies are much more abundant than 3D data, it is possible to pretrain an embedding model based on 1D sequences and inject the pretrained model into our framework. It is also promising to conduct the pretraining on general protein data from PDB and then carry out finetuning on the antibody dataset. Another potential way is to use Alphafold to produce pseudo 3D structures from 1D sequences for pretraining. We leave the above considerations for future work.

\section{Examples}
We display more examples of antigen-binding CDR-H3 designed by our MEAN in Figure~\ref{fig:examples}.
\begin{figure}[htbp]
    \centering
    \includegraphics[width=.6\textwidth]{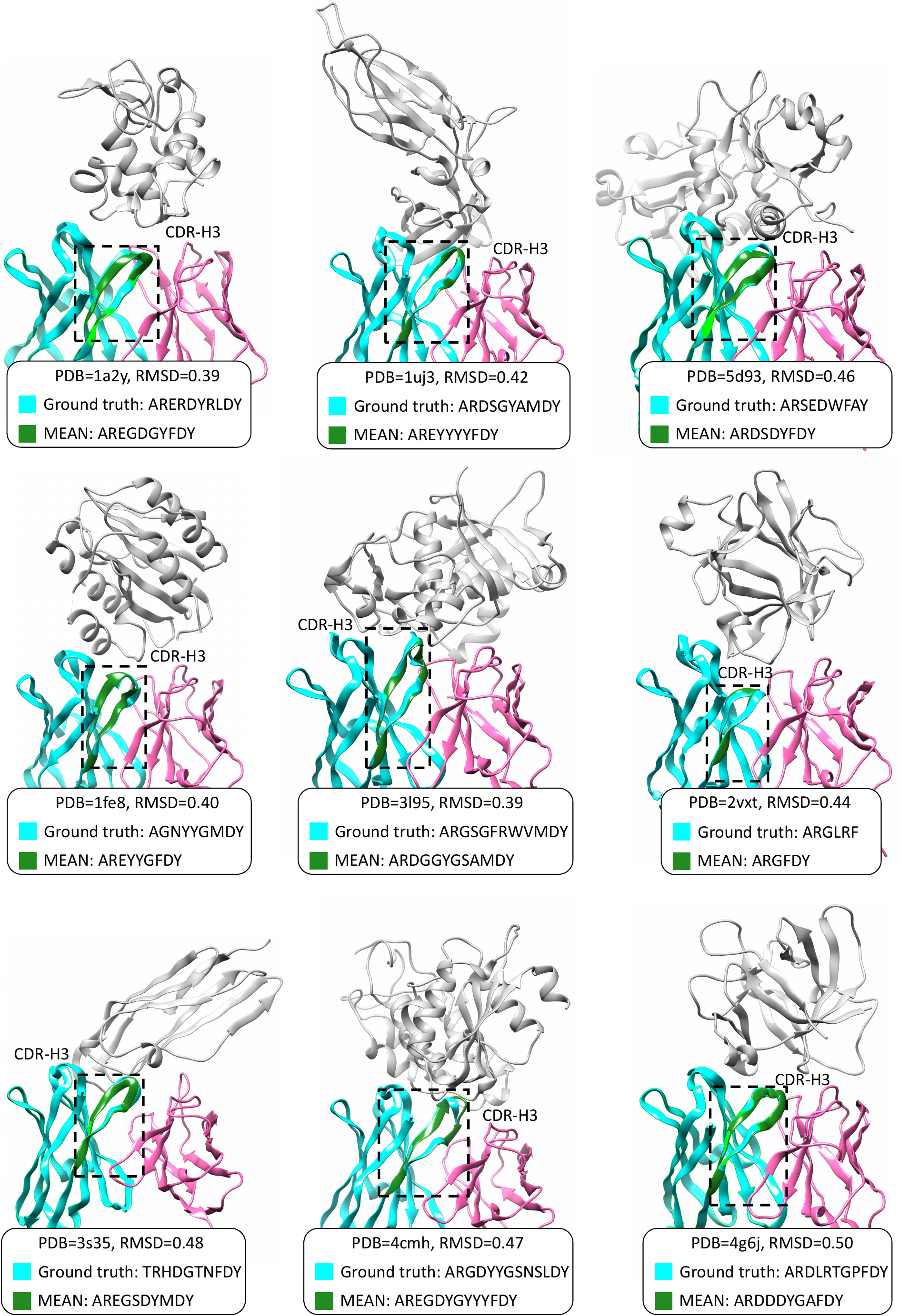}
    \caption{More examples from antigen-binding CDR-H3 design by our MEAN.}
    \label{fig:examples}
\end{figure}

\section{Towards the Real-world Question}

Note that the efficacy of our pipeline for affinity optimization is influenced by the generalizability of the predictor, hence how to choose a desirable predictor is vital. As proof of concept, we currently apply the predictor in \citet{shan2022deep} for its easy implementation and fast computation. Since our pipeline is general, it is possible to replace the predictor with other variants, such as wetlab validation, which can return real affinity but is time-consuming. One can also combine the advantages of the learning-based predictor and wetlab validation to improve the generalizability while keeping efficiency, by, for example, choosing only top-k samples and using the wetlab feedback to rectify the predictor, creating a so-called closed loop between "dry computation" and "wet experiment" akin to the pipeline used in \citet{shan2022deep}. As increasing attention has been paid to this domain, we believe more and more robust and efficient predictors will emerge in the future.

While the pipeline for affinity optimization only addresses a narrow need in the field of antibody discovery, we also discuss the potential pipeline for the 'real-world' question here (i.e. generate a binidng antibody given an arbitrary antigen). The 'real-world' question might be decomposed into several components: epitope identification, antibody structure prediction, docking, CDR design, and affinity prediction. Each component itself is challenging and currently a promising topic in the community. In \textsection~\ref{sec:docked_template}, we actually combined the last three components, where we used HDock for global docking to form the initial complex, our MEAN for CDR design, and Rosetta for affinity computation.  If we further add the components for epitope identification and antibody structure prediction (such as Igfold~\citep{ruffolo2021deciphering, ruffolo2022fast}), we are able to set up an end2end pipeline for antibody discovery: it can output a desirable antibody (1D sequence and 3D structure) for any given antigen target.  However, setting up such an end2end pipeline is challenging, as the accumulated errors from the former components will easily make the latter fail.  A potential solution is making all components learnable, and tuning them as a whole.

Lastly, our proposed model and the proof-of-concept experiments we implemented will provide valuable clues for future exploration to derive enhanced techniques. With the efforts of all researchers in the field, we have reason to believe that this ultimate problem will be solved, perhaps step by step.

% \section{Codes}
% 
% The codes for our MEAN are available at \url{https://anonymous.4open.science/r/MEAN-conditional-antibody-translation-FB99/}. We have anonymized the links and contents.

\end{document}